\documentclass[a4paper, 12pt]{article}

\usepackage{authblk}
\usepackage[a4paper,top=2cm,bottom=2cm,left=2cm,right=2cm,marginparwidth=1.75cm]{geometry}
\usepackage[colorlinks=true, allcolors=blue]{hyperref}

\usepackage{natbib}
\usepackage[english]{babel}
\usepackage[utf8]{inputenc}
\usepackage[T1]{fontenc}
\usepackage{lmodern}

\usepackage[title]{appendix}
\usepackage{xcolor}
\usepackage{textcomp}
\usepackage{manyfoot}
\usepackage{enumitem}
\setlist[enumerate, 1]{label=(\roman*), leftmargin=*}
\usepackage{listings}
\usepackage{placeins}

\usepackage{amsmath, amssymb, amsfonts, amsthm, mathtools, nicefrac, amscd, latexsym, mathrsfs, dsfont}
\mathtoolsset{showonlyrefs}
\usepackage{MnSymbol}

\usepackage{graphicx}
\usepackage[format=hang, font=small, labelfont=bf]{caption}
\usepackage{subcaption}

\usepackage{tabularx, fix-cm, array}
\usepackage{multirow, multicol}
\usepackage{booktabs}

\usepackage[algo2e, ruled]{algorithm2e}%

\usepackage[colorlinks=true, allcolors=blue]{hyperref}
\usepackage{orcidlink}

\theoremstyle{plain}

\newtheorem{theorem}{Theorem}
\newtheorem{corollary}[theorem]{Corollary}
\newtheorem{proposition}[theorem]{Proposition}
\newtheorem{lemma}[theorem]{Lemma}

\theoremstyle{remark}

\renewcommand{\mid}{\;\ifnum\currentgrouptype=16 \middle\fi|\;}
\newcommand*{\folder}{images/}

\title{Optimal Sampling for Kernel Quadrature on Unbounded Domains}
\date{}
\author[1]{Edoardo Bandoni}
\author[1,2,\orcidlink{0000-0001-6635-3261}]{Christian P. Robert}
\author[1,3,\orcidlink{0000-0002-7813-0185}]{Julien Stoehr}

\affil[1]{\small CEREMADE, Universit\'e Paris-Dauphine, Universit\'e PSL, CNRS, 75016 Paris, France}
\affil[2]{\small Department of Statistics, University of Warwick,  
Coventry CV4 7AL, UK}
\affil[3]{\small Universit\'e Paris-Saclay, INRAE, AgroParisTech, UMR MIA Paris-Saclay, 91120 Palaiseau, France}

\begin{document}

\maketitle

\begin{abstract}
Kernel quadrature is widely used to approximate integrals of smooth functions, with worst-case error typically decaying at the minimax rate $n^{-\alpha/d}$ for smoothness $\alpha$ in dimension $d$. Existing rate-optimal methods often depend on deterministic point sets tailored to a specific kernel, making them sensitive to misspecification and less robust in practice. In this work, we study randomized quadrature methods with a focus on robustness rather than kernel-specific optimality. We construct an explicit, $n$-dependent sampling distribution that achieves minimax rates for worst-case error over smoothness classes without requiring knowledge of the kernel. This kernel-agnostic design improves robustness while retaining optimal rates. Our analysis includes unbounded sampling measures such as Gaussian and Student-$t$ distributions, extending beyond compact domains. The results provide both theoretical guarantees and a practical recipe for robust, rate-optimal randomized quadrature.
\end{abstract}

\begin{keywords} Probabilistic Integration, Bayesian Quadrature, Uncertainty Quantification, Contraction Rates, Minimax Rates
\end{keywords}

\section{Introduction}\label{sec-intro}

Computing integrals is a fundamental task in modern statistics. Expectations, marginal likelihoods, posterior summaries, and predictive quantities are all expressed as integrals that rarely admit closed-form solutions. As a result, numerical integration plays a central role in both Bayesian and frequentist computation. In many contemporary applications, the integrand is expensive to evaluate and approximation error propagates through complex computational pipelines. In such settings, the choice of integration method directly impacts statistical accuracy.

Monte Carlo (MC) methods are widely used due to their simplicity, robustness, and dimension-independent convergence rate. The standard MC estimator achieves a root mean square error (RMSE) of order $n^{-1/2}$, regardless of dimension \citep{robert2004monte}. This makes MC particularly attractive in high-dimensional problems. However, in low and moderate dimensions the $n^{-1/2}$ rate can be slow compared to methods that exploit additional structure such as smoothness.

Kernel-based quadrature methods, also known as Bayesian quadrature (BQ), provide a principled alternative grounded in function approximation \citep{hennig2022probabilistic, mahsereci2026bayesian}. By placing a Gaussian process prior on the integrand \citep{rasmussen2006gaussian}, BQ produces both a point estimate of the integral and a posterior distribution that quantifies epistemic uncertainty due to finite evaluations \citep{briol2019probabilistic}. The resulting estimator can be written as a weighted sum of function evaluations at selected points, where the weights are determined by a positive-definite kernel and encode information about the integration measure.

Importantly, kernel quadrature should be viewed not merely as a refinement of Monte Carlo, but as a generalisation of classical deterministic cubature rules. For suitable choices of kernel and design points, the posterior mean recovers well-known numerical integration schemes such as the trapezoidal rule and Gauss--Hermite quadrature \citep{suldin1959,sarkka2016,karvonen2017}. In this sense, KQ provides a unifying probabilistic framework for classical quadrature, enriching these methods with uncertainty quantification and a functional-analytic interpretation. When the integrand possesses $\alpha$ derivatives in dimension $d$, KQ can achieve convergence rates of order $n^{-\alpha/d}$ \citep{novak1988deterministic}, which can substantially outperform the Monte Carlo rate in smooth, low-dimensional settings.

Despite these attractive properties, a central and still poorly understood aspect of kernel quadrature is the choice of sampling distribution for the evaluation points. While much of the literature focuses on deterministic point sets obtained by variance minimisation \citep{briol2015frankwolfe,osborne2012active,pmlr-v108-fisher20a} or space-filling criteria \citep{briol2019probabilistic,KanSriFuk17,llorente2020adaptive,fisher2020locally}, randomized constructions offer improved robustness and practical flexibility \citep{pmlr-v70-briol17a}. For a given function class $\mathcal{F}$ and distribution class $\mathcal{P}$, kernel $k$, and sample size $n$, one can formalise the problem of optimal sampling as the search for a distribution that minimises the worst-case RMSE over $\mathcal{F}$ and $\mathcal{P}$ \cite{pmlr-v70-briol17a}. However, in contrast to importance sampling, where an optimal proposal distribution can be characterised explicitly \cite{robert2004monte}, no general closed-form solution is known for kernel quadrature, and limited theoretical guidance is available.

The challenge becomes particularly pronounced on unbounded domains. While minimax-optimal sampling strategies are known for smooth functions on bounded sets \cite{chen2025nested,briol2019probabilistic}, many statistical applications involve Gaussian or heavy-tailed measures defined on the whole space $\mathbb{R}^d$. In such settings, existing constructions do not directly apply, and naive choices, such as sampling from the target distribution, can lead to significant inefficiencies \cite{pmlr-v70-briol17a}. Moreover, deterministic designs that depend strongly on precise kernel specification may suffer from a lack of robustness when smoothness assumptions are only approximately satisfied \citep{wynne2021,kanagawa2019}.

The aim of this work is to analyse and construct randomized sampling strategies for kernel quadrature that achieve minimax-optimal convergence rates on unbounded domains. We provide an explicit, $n$-dependent sampling distribution that attains the optimal rate $n^{-\alpha/d}$ over smoothness classes, while remaining kernel-agnostic and robust to misspecification. Our results clarify the role of sampling in kernel quadrature, extend minimax theory to unbounded settings relevant to statistics, and provide a principled and practical foundation for robust probabilistic integration.

\paragraph*{Outline}
The paper is organized as follows. Section \ref{sec-background} provides background on BC and introduces an analytical framework for studying the proposed method, while also discussing relevant results and methodologies from the existing literature. In Section \ref{sec-main}, we present our novel theoretical contributions, extending optimal sampling results to unbounded domains. Section \ref{sec_numexp} illustrates the practical utility of our framework through numerical experiments. Finally, Section \ref{sec_conclusion} concludes the paper with a discussion of key findings and potential directions for future work.

\section{Notations}
Let $f:\mathcal{X}\rightarrow\mathbb{R}$ be a function defined on $\mathcal{X} \subset \mathbb{R}^d$, and let $\Pi$ be a Borel probability measure on $\mathcal{X}$. We denote by
\[
\Pi(f) = \int_{\mathcal{X}} f(x)\, \Pi(\mathrm{d}x)
\]
the integral of $f$ with respect to $\Pi$. For $1 \le p < \infty$, we denote by \(\|f\|_{L^p(\mathcal{X},\Pi)} = \left( \int_\mathcal{X} |f(x)|^p \, \Pi(\mathrm{d}x) \right)^{1/p}\) the weighted $L^p$-norm of $f$ on $\mathcal{X}$ with respect to $\Pi$, and by \(\|f\|_{L^\infty(\mathcal{X})} = \mathrm{ess\,sup}_{x \in \mathcal{X}} |f(x)| \) the $L^\infty$ (essential supremum) norm of $f$. When the measure and domain are clear from context, we may simply write $\|f\|_{L^p}$ or $\|f\|_{L^\infty}$. Moreover, let $\mathcal{H}^\alpha(\mathcal{X})$ denote the Sobolev space of order $\alpha \in \mathbb{N}_0$ on $\mathcal{X}$, equipped with the norm
\[
\|f\|_{\mathcal{H}^\alpha(\mathcal{X})} = \Bigg( \sum_{|\beta| \le \alpha} \int_\mathcal{X} |D^\beta f(x)|^2 \, \mathrm{d}x \Bigg)^{1/2},
\]
where $\beta = (\beta_1, \dots, \beta_d)$ is a multi-index, $|\beta| = \beta_1 + \dots + \beta_d$, and \(D^\beta f = \frac{\partial^{|\beta|} f}{\partial x_1^{\beta_1} \dots \partial x_d^{\beta_d}}\) denotes the corresponding weak derivative. When the domain $\mathcal{X}$ is clear from context, we will omit it and simply write $\mathcal{H}^\alpha$. Let $k: \mathcal{X} \times \mathcal{X} \to \mathbb{R}$ be a symmetric, positive-definite kernel, the reproducing kernel Hilbert space (RKHS) associated with $k$, denoted by $\mathcal{H}_k(\mathcal{X})$, is the Hilbert space of functions $f:\mathcal{X}\to\mathbb{R}$ such that \(f(x) = \langle f, k(x, \cdot) \rangle_{\mathcal{H}_k} \quad \text{for all } x \in \mathcal{X}\), where $\langle \cdot, \cdot \rangle_{\mathcal{H}_k}$ is the inner product in $\mathcal{H}_k(\mathcal{X})$. The RKHS norm is defined as
\[
\|f\|_{\mathcal{H}_k}^2 =\langle f,f\rangle_{\mathcal{H}_k} = \sum_{i,j} c_i c_j k(x_i, x_j)
\]
for any finite linear combination $f = \sum_i c_i k(x_i, \cdot)$. Similarly, when the domain is clear, we will omit $\mathcal{X}$ and write $\mathcal{H}_k$. Given a set of points $\mathbf{x}_{1:n}=(x_1, \ldots, x_n)$, $x_i \in \mathcal{X}$ for $i=1,\dots,n$ and a compact domain $\mathcal{X}$, we define the fill distance as \(h_{\mathbf{x}_{1:n}, \mathcal{X}} = \sup_{x \in \mathcal{X}} \min_{1 \leq j \leq n} \|x - x_j\|_2\), where $\|\cdot\|_2$ is the Euclidean norm. Moreover, we distinguish different probabilistic versions of the big-$\mathcal{O}$ notation.  For a sequence of random variables $(X_n)_{n \ge 1}$ and a deterministic sequence $(a_n)_{n \ge 1}$, we write $X_n = \mathcal{O}_{\mathbb{P}}(a_n)$ if $X_n / a_n$ is bounded in probability, i.e., if for every $\varepsilon > 0$, there exists $M > 0$ such that
\[
\sup_{n \ge 1} \mathbb{P}\!\left[\left|\frac{X_n}{a_n}\right| > M\right] \le \varepsilon.
\]
We refer to this notion as big-$\mathcal{O}$ in probability. In addition, we will also use a stronger high-probability version of this notation: we write $X_n = \mathcal{O}_{P}(a_n)$ (with high probability) if there exist constants $C,c > 0$ such that $\mathbb{P}(|X_n| > C a_n) \le n^{-c}$, which is a big-$\mathcal{O}$ notation with polynomially high probability. If instead, it further holds that $\mathbb{P}(|X_n| > C a_n) \le e^{-cn}$, we will write $X_n\in\mathcal{O}_P(a_n)$ with exponentially high probability. Finally, we will use $f$ to denote the integrand, viewed as an element of a generic function space, and $f^\star$ to denote its Gaussian process surrogate.

\section{Background on Bayesian Quadrature}\label{sec-background}

\paragraph*{Problem formulation} Consider approximating the Lebesgue integral $\Pi(f) = \int_\mathcal{X} f\, d\Pi$, where $\Pi$ is a Borel probability measure on $\mathcal{X} \subseteq \mathbb{R}^d$ and $f$ is square-integrable with respect to $\Pi$ and it is possible to compute exact evaluations. In Bayesian quadrature \citep{hennig2022probabilistic}, the integrand $f$ is modeled as a Gaussian process $f^\star \sim \mathcal{GP}(0, k)$, where $k : \mathcal{X} \times 
\mathcal{X} \to \mathbb{R}$ is a symmetric, positive-definite kernel. Given evaluations $\mathbf{f}_{1:n} = 
(f(x_1), \ldots, f(x_n))^\top=(f^\star(x_1), \ldots, f^\star(x_n))^\top = \mathbf{f}^\star_{1:n}$ at design points $\mathbf{x}_{1:n}=(x_1, \ldots, x_n)^\top \in \mathcal{X}^n$, 
the posterior distribution of the integral is
\[
\Pi(f^\star) \mid \mathbf{x}_{1:n} \sim \mathcal{N}\,\left(m_\Pi\left(\mathbf{x}_{1:n}\right)^\top \textbf{K}^{-1}\mathbf{f}_{1:n},\; 
\Pi^{\otimes 2}(k) - m_\Pi(\mathbf{x}_{1:n})^\top \textbf{K}^{-1} m_\Pi(\mathbf{x}_{1:n})\right),
\]
where $\textbf{K}$ is a symmetric positive definite matrix with $\textbf{K}_{ij} = k(x_i, x_j)$, and 
\begin{equation*}
    m_\Pi(\mathbf{x}_{1:n}) = (\Pi(k(\cdot, x_1)), \ldots, \Pi(k(\cdot,x_n)))^{\top}
\end{equation*} is the kernel mean vector. The posterior mean defines the 
weighted estimator
\[
\widehat{\Pi}_n(f^\star) = \sum_{j=1}^n w_j f(x_j), \qquad \mathbf{w}_{1:n} = (w_1, \ldots, w_n)^{\top} = \textbf{K}^{-1}m_\Pi(\mathbf{x}_{1:n}),
\]
where we recall that $f(x_i)=f^\star(x_i)$ for $i=1,\dots,n$ and some properties, like positivity and the magnitude of the BQ weights, can be recovered in some contexts \cite{KarvonenKanagawaSarkka2019}.

The RKHS \citep{berlinet2011reproducing} provides a clean interpretation of integration error. The kernel mean  embedding $\mu_\Pi \in \mathcal{H}_k$  satisfies $\Pi(f^\star) = \langle f, \mu_\Pi \rangle_{\mathcal{H}_k}$ for all $f \in \mathcal{H}_k$, and the estimator  $\widehat{\Pi}_n(f^\star)$ corresponds to replacing $\mu_\Pi$ by the empirical element $\mu_n = \sum_{j=1}^n w_j k(\cdot, x_j)$. The posterior variance then coincides with the squared worst-case integration error over the unit ball of $\mathcal{H}_k$,
\[
\operatorname{Var}_{f^\star}[\Pi(f^\star) \mid \mathbf{x}_{1:n}] 
= \|\mu_\Pi - \mu_n\|_{\mathcal{H}_k}^2 
= \sup_{\|f\|_{\mathcal{H}_k} \leq 1} |\Pi(f) - \widehat{\Pi}_n(f)|^2,
\]
with weights that minimize the worst case error \cite{KarvonenKanagawaSarkka2019}, providing a principled, kernel-encoded measure of remaining integration uncertainty. This variational characterization has motivated a substantial line of work on the construction of point sets that directly minimize $\|\mu_\Pi - \mu_n\|_{\mathcal{H}_k}$. Deterministic approaches include greedy sequential minimisation of the posterior variance: ~\cite{briol2015frankwolfe,wagstaff_batch_2018,simpson2021marginalising,adachi_fast_2022} selected evaluation points using the Frank–Wolfe algorithm, while providing provable theoretical convergence guarantees on the integration error. ~\cite{osborne2012active, gunter2014} developed Bayesian quadrature methods that actively select evaluation points to estimate model evidence and posterior quantities, where convergence guarantees for a broad class of such adaptive Bayesian quadrature methods are established by \citet{KanHen19}. On the randomized side, \citet{bach2017} provided strong theoretical guarantees for a specially constructed sampling distribution using the Mercer decomposition of the kernel, but in practice this distribution may be difficult to compute. A more recent line of work exploits determinantal point processes (DPPs) as a randomized design  strategy~\cite{belhadji2019kernel}: since DPPs are repulsive by construction, samples from  a DPP associated with the kernel $k$ tend to spread across $\mathcal{X}$ in a way  that controls the fill distance. These constructions, however, require to efficiently sample from the associated DPP, which can  be computationally prohibitive. Finally, \cite{hayakawa2022} proposed kernel quadrature methods that enforce positive integration weights by selecting a subsample from an initial large set of points, aiming to improve numerical stability while retaining accuracy, and provides theoretical error guarantees. However, it requires explicit knowledge of the Mercer decomposition of the kernel, which is not available in general contexts.

\paragraph*{Fill Distance and Convergence Rates} When $\mathcal{H}_k$ is norm-equivalent to a Sobolev space $\mathcal{H}^\alpha$ of smoothness $\alpha > d/2$, the minimax integration error over this class decays at the rate $n^{-\alpha/d}$~\citep{novak1988deterministic}. The geometric mechanism through which the placement of design points enters this bound is captured by the fill distance which measures the radius of the largest ball in $\mathcal{X}$ containing no design point. Results from scattered data approximation~\citep{Wendland_2004} show that the kernel interpolant $s_n$ of any $f \in \mathcal{H}_k(\mathcal{X})$, where $\mathcal{X}$ is compact and satisfies the interior cone condition, satisfies
\[
\|f - s_n\|_{L^\infty} \leq C\, h_{\mathbf{x}_{1:n},\mathcal{X}}^\alpha \|f\|_{\mathcal{H}_k},
\]
for some constant $C$, which directly implies that \(\|\mu_\Pi - \mu_n\|_{\mathcal{H}_k} \leq C\, h_{\mathbf{x}_{1:n},\mathcal{X}}^\alpha\). Controlling optimally the fill distance is therefore sufficient to achieve optimal rates.

\paragraph*{Existing Guarantees on Bounded Domains} The interplay between fill distance and sampling distribution was made precise 
by~\citet{briol2019probabilistic} for the case $\mathcal{X} = [0,1]^d$. If the design points are generated by a uniformly ergodic Markov 
chain targeting $\Pi$, a covering argument yields $h_{\mathbf{x}_{1:n},\mathcal{X}} = 
\mathcal{O}_\mathbb{P}(n^{-1/d+\varepsilon})$ in probability for any $\varepsilon > 0$, 
from which the worst-case error satisfies
\[
\|\mu_\Pi - \mu_n\|_{\mathcal{H}_k} = \mathcal{O}_\mathbb{P}\,\!\big(n^{-\alpha/d+\varepsilon}\big).
\]
This establishes that MCMC-based Bayesian quadrature attains the minimax-optimal 
rate $n^{-\alpha/d}$ up to an arbitrarily small logaritmic correction, and moreover 
that the posterior contracts to the true integral at a super-exponential rate. 
Importantly, the proof proceeds entirely through control of the fill distance and 
does not require the design to be tailored to the specific kernel: the sampling 
distribution alone determines the geometry, and optimality follows as a consequence.

This result, however, is confined to compact domains $\mathcal{X}$ in $\mathbb{R}^d$, by smoothly reparametrizing $\mathcal{X}$ in $[0,1]^d$. Many statistical applications involve measures with unbounded support, such as Gaussian or heavy-tailed distributions on $\mathbb{R}^d$, for which the fill distance framework does not directly apply: the domain cannot be covered by finitely many balls of a fixed radius $h$, and naive sampling from the target can concentrate mass in high-probability regions while leaving large portions of the effective support underexplored. The sensitivity of the sampling distribution was demonstrated empirically by~\citet{pmlr-v70-briol17a}, who showed that even moderate misspecification of the proposal leads to substantial deterioration of the root mean square error, independently of the asymptotic rate.

\section{Theoretical Guarantees for Unbounded Domains}\label{sec-main}

\subsection{The Challenge of Unbounded Domains}
\label{sec-chal}

A natural approach to extending the guarantees of~\citet{briol2019probabilistic} to unbounded domains is to map the problem back to $[0,1]^d$ using a change of variables. Let $T$ denote the inverse cumulative distribution function (CDF) associated with the measure $\Pi$. Then the integral can be rewritten as
\[
\int_{\mathcal{X}} f(x)\, d\Pi(x)
= \int_{[0,1]^d} f(T(u))\, du
= \int_{[0,1]^d} \tilde{f}(u)\, du.
\]
If $\tilde{f}$ belongs to an RKHS $\mathcal{H}$ that is norm-equivalent to a Sobolev space of order $\alpha$, then Theorem~1 of~\citet{briol2019probabilistic} applies directly and the rate $n^{-\alpha/d}$ is recovered. However, this requirement typically fails in unbounded settings, even in simple cases such as when $\Pi$ is the standard Gaussian measure on $\mathbb{R}$. To illustrate this, consider the following example.

\paragraph*{Example}
Let $f(x) = x$, which is infinitely smooth ($f \in C^\infty(\mathbb{R})$), and let $\Pi$ be the standard Gaussian measure on $\mathbb{R}$. In this case, \(\tilde{f}(u) = f(\Phi^{-1}(u)) = \Phi^{-1}(u)\), where $\Phi$ denotes the standard Gaussian CDF. The first derivative of $\tilde{f}$ is given by \(\tilde{f}' (u) = \frac{1}{\phi(\Phi^{-1}(u))}\), where $\phi$ is the standard Gaussian density. The \(L^2\)-norm of $\tilde{f}'$ is given by
\[
\int_0^1 \left|\tilde{f}'(u)\right|^2\, du
= \int_0^1 \frac{1}{[\phi(\Phi^{-1}(u))]^2}\, du
= \int_{-\infty}^\infty \frac{1}{\phi(x)}\, dx
= \infty.
\]
Hence, $\tilde f\in L^2(0,1)$, but $\tilde{f} \notin H^1([0,1])$ and, as a result, applying Theorem~1 of~\citet{briol2019probabilistic} to the transformed problem yields convergence rates no faster than $\mathcal{O}(n^{-1+\varepsilon})$, against the theoretical $\mathcal{O}(n^{-\alpha})$, for any $\alpha\ge1$.

This represents a catastrophic loss of regularity: a smooth function $f \in \mathcal{H}^\alpha$ on $\mathbb{R}$ is transformed into $\tilde{f}$ on $[0,1]$ whose regularity is independent of $\alpha$, destroying the fast convergence rates that kernel quadrature is designed to achieve. Obtaining analogous posterior contraction results for integration over unbounded domains therefore requires a different approach. Rather than transforming the domain, we work directly on $\mathcal{X} = \mathbb{R}^d$ and construct a sampling distribution that controls the fill distance in a way that respects both the tail behaviour of $\Pi$ and the smoothness encoded by $\mathcal{H}_k$. Our main result establishes that this is possible.

\subsection{Optimal Sampling on Unbounded Domains}

As in~\citet{briol2019probabilistic}, we make a slight strengthening of the assumption on the kernel: $C_k := \sup_{x \in \mathcal{X}} k(x,x) < \infty$. This implies that all $f \in \mathcal{H}_k$ are bounded on $\mathcal{X}$. Our main theoretical contribution is the following theorem, whose proof is provided in Appendix~\ref{appendixA}.

\begin{theorem}
\label{thm1}
Let $\mathcal X = \mathbb R^d$ and let $\mathcal H_k$ be a reproducing kernel Hilbert space
norm-equivalent to the Sobolev space $\mathcal H^\alpha(\mathbb R^d)$ with $\alpha > d/2$.
Suppose $x_1,\dots,x_n$ are sampled independently from a proposal distribution $Q_n$, and the target measure $\Pi$ is either Gaussian or Student. Then, for any $f\in\mathcal{H}_k$ and for any $\varepsilon>0$, $\delta>0$, there exists $C_\delta>0$ such that
$\|\mu_\Pi - \mu_n\|_{\mathcal H_k} = \mathcal O_P\left(n^{-\alpha\tau + \varepsilon}\right)$ and
\begin{equation*}
    \mathbb P\left[ \left\lvert\Pi(f^\star) - \Pi(f)\right\rvert < \delta \mid \mathbf{x}_{1:n} \right] =
        1 - \mathcal O_P\,\left(\exp\left(-C_\delta\, n^{2\alpha\tau - \varepsilon}\right)\right),
\end{equation*}
where $\tau$ is defined as follows
\begin{enumerate}
    \item For the Gaussian target $\Pi = \mathcal N(0,I_d)$, and $Q_n = \mathcal N(0, \Sigma_n)$ with $\Sigma_n \asymp (\alpha\log n) I_d
    $, then $\tau = 1/d$.
    
    \item For the Student--$t$ target $\Pi = t_\nu(0,I_d)$ with $\nu$ degrees of freedom, and
    $Q_n =  t_\nu(0, \Sigma_n)$ with $\Sigma_n \asymp n^{2\nicefrac{\alpha\tau}{(\nu+d/2)}} I_d$, then
    \begin{equation*}
    \tau = \frac{\nu + d/2}{d(a+ \nu + d/2)}.
    \end{equation*}
\end{enumerate}
\end{theorem}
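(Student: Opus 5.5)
The plan is to split the worst-case error into a \emph{bulk} contribution on a growing ball $B_{R_n}=\{\|x\|\le R_n\}$ and a \emph{tail} contribution from $\mathbb R^d\setminus B_{R_n}$. We choose $R_n$ so that the two pieces balance, and we tune the proposal $Q_n$ so that the sample fills $B_{R_n}$ at the optimal density.

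Since $\mu_n$ is the $\mathcal H_k$-projection of $\mu_\Pi$ onto $\mathrm{span}\{k(\cdot,x_j)\}$, we have
\[
\|\mu_\Pi-\mu_n\|_{\mathcal H_k}=\sup_{\|f\|_{\mathcal H_k}\le1}|\Pi(f-s_n)|,
\]
where $s_n$ is the kernel interpolant of $f$. For each such $f$ we write
\[
|\Pi(f-s_n)|\le \sup_{B_{R_n}}|f-s_n| + \int_{B_{R_n}^c}|f-s_n|\,d\Pi .
\]

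\textbf{Bulk term.} The Sobolev sampling inequality of \citet{Wendland_2004} on the ball $B_{R_n}$ gives $\|f-s_n\|_{L^\infty(B_{R_n})}\le C h^\alpha$, where $h=h_{\mathbf x_{1:n},B_{R_n}}$. Since a ball satisfies the interior cone condition with constants invariant under dilation, this holds uniformly in $R_n$, provided $h\le c_0$ for a fixed $c_0$; that condition holds with high probability for large $n$.

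\textbf{Tail term.} Interpolation is a projection, so $\|s_n\|_{\mathcal H_k}\le\|f\|_{\mathcal H_k}$. Combined with $C_k<\infty$ this gives $\|f-s_n\|_\infty\le 2\sqrt{C_k}$, and hence the tail term is at most $2\sqrt{C_k}\,\Pi(B_{R_n}^c)$.

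\textbf{Fill distance.} We cover $B_{R_n}$ by $N\asymp(R_n/h)^d$ cubes of side $\asymp h$. The fill distance is at most a constant times $h$ as soon as every cube contains a sample. By a union bound,
\[
\mathbb P(\text{some cube is empty})\le N\bigl(1-p_{\min}\bigr)^n\le N\exp\bigl(-n\,p_{\min}\bigr),
\]
where $p_{\min}\gtrsim h^d\inf_{B_{R_n}}q_n$ and $q_n$ is the density of $Q_n$. Requiring $n\,h^d\inf_{B_{R_n}}q_n\ge c\log n$ gives polynomially high probability. The $n^{\varepsilon}$ slack in the statement absorbs the resulting logarithmic factors.

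\textbf{Gaussian case.} Here $\Pi(B_R^c)\lesssim R^{d-2}e^{-R^2/2}$, so taking $R_n^2\asymp 2\alpha\log n$ makes the tail $O(n^{-\alpha}\cdot\mathrm{polylog})$, which is at most $n^{-\alpha/d}$. With $\Sigma_n=\sigma_n^2I_d$ and $\sigma_n^2\asymp\alpha\log n$, we have $\inf_{B_{R_n}}q_n\asymp\sigma_n^{-d}e^{-R_n^2/(2\sigma_n^2)}$. This is only polylogarithmically small, because $R_n^2/\sigma_n^2$ is bounded. It follows that $h\asymp(\log n)^{c}n^{-1/d}$, and so the bulk term is $n^{-\alpha/d+\varepsilon}$ and $\tau=1/d$.

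\textbf{Student case.} Here $\Pi(B_R^c)\asymp R^{-\nu}$, but we must use the integrated version: the error profile $|f-s_n|$ is bounded yet is only $L^2$-small. The exponent $\nu+d/2$ in the statement suggests exploiting this with Cauchy--Schwarz. Using $\|f-s_n\|_{L^2(\mathbb R^d)}\le 2\|f\|_{\mathcal H^\alpha}$ and the density bound $\pi(x)\lesssim\|x\|^{-(\nu+d)}$,
\[
\int_{B_R^c}|f-s_n|\,\pi
\le\|f-s_n\|_{L^2}\Bigl(\int_{B_R^c}\pi^2\Bigr)^{1/2}
\lesssim R^{-(\nu+d/2)}.
\]
With $\Sigma_n=\sigma_n^2I_d$, on $B_{R_n}$ we have $q_n\gtrsim\sigma_n^{-d}$ when $R_n\lesssim\sigma_n$. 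This gives $h\asymp\bigl(\sigma_n^d\log n/n\bigr)^{1/d}$, so the bulk term is $\bigl(R_n^d/n\bigr)^{\alpha/d}$ up to logarithms. Setting $R_n^{-(\nu+d/2)}=n^{-\alpha\tau}$ and $(R_n^d/n)^{\alpha/d}=n^{-\alpha\tau}$, we obtain $R_n=n^{\alpha\tau/(\nu+d/2)}$ and $\tau=\dfrac{\nu+d/2}{d(\alpha+\nu+d/2)}$. This matches the stated $\tau$, reading $a$ as $\alpha$, and the stated $\Sigma_n\asymp R_n^2 I_d$.

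\textbf{Contraction statement.} This follows as in \citet{briol2019probabilistic}. The posterior is Gaussian with mean $\widehat\Pi_n(f)$ and variance $\sigma_n^2=\|\mu_\Pi-\mu_n\|^2_{\mathcal H_k}$. Moreover, $|\widehat\Pi_n(f)-\Pi(f)|\le\|f\|_{\mathcal H_k}\,\sigma_n$, which is less than $\delta/2$ for large $n$. A Gaussian tail bound then gives posterior mass outside the $\delta$-window at most $\exp\bigl(-\delta^2/(8\sigma_n^2)\bigr)$. Since $\sigma_n^2=O_P(n^{-2\alpha\tau+2\varepsilon})$, this is $\exp\bigl(-C_\delta n^{2\alpha\tau-\varepsilon}\bigr)$ after renaming $\varepsilon$.

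\textbf{Main obstacle.} I expect the tail term to be the hardest step: it must be controlled uniformly over the unit ball, with a tail exponent sharp enough to give the stated $\tau$. In the Gaussian case the crude $L^\infty$ bound suffices. In the Student case the $L^2$/Cauchy--Schwarz refinement is needed, and one must check that the $L^2$ bound on $f-s_n$ over $B_{R_n}^c$ holds without any sampling there.
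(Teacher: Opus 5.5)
Your argument follows the paper's proof step for step. It has the same split of $\Pi(f-s_n)$ over $B_{R_n}$ and $B_{R_n}^c$, and the same covering/union-bound control of the fill distance via $n h^d\inf_{B_{R_n}}q_n\ge c\log n$. In the Student case it uses the same Cauchy--Schwarz tail with $\|\pi\|_{L^2(B_{R_n}^c)}\asymp R_n^{-(\nu+d/2)}$, and it arrives at the same $R_n$, $\Sigma_n$ and $\tau$ and the same Gaussian-posterior contraction argument from Briol et al. Your Gaussian tail $2\sqrt{C_k}\,\Pi(B_{R_n}^c)$ with $R_n^2\asymp 2\alpha\log n$ is actually cleaner than the paper's, whose estimate $R^{d-2}e^{-R^2}$ is the square of $\|\pi\|_{L^2(B_R^c)}$ rather than the norm. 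However, the bulk step, which is exactly the step the paper takes, does not hold as you wrote it.

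The bound $\|f-s_n\|_{L^\infty(B_{R_n})}\le Ch^\alpha\|f\|_{\mathcal H_k}$ is false when $\mathcal H_k$ is equivalent to $\mathcal H^\alpha$: the uniform (power-function) rate is only $h^{\alpha-d/2}$.
\begin{itemize}
\item \emph{Counterexample.} Pick $x\in\overline{B_{R_n}}$ with $\min_j\|x-x_j\|_2=h$, and set $g(y)=\psi((y-x)/h)$ with $\psi\in C_c^\infty(B_1(0))$ and $\psi(0)=1$. Then $g$ vanishes at every node, so its interpolant is $0$, while $\|g\|_{\mathcal H^\alpha}\asymp h^{d/2-\alpha}$ for $h\le1$. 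Hence $\sup_{\|f\|_{\mathcal H_k}\le1}|f(x)-s_n(x)|\ge c\,h^{\alpha-d/2}$.
\item \emph{Consequence.} With the correct rate, your bulk term is only $n^{-\alpha/d+1/2+\varepsilon}$ for the Gaussian target. The Student balance shifts as well, so the stated $\tau$ is not reached.
\item \emph{Fix.} Avoid the sup norm altogether. Apply the sampling inequality with $L^2$ on the left, $\|u\|_{L^2(B_{R_n})}\le Ch^\alpha\|u\|_{\mathcal H^\alpha(B_{R_n})}$, to $u=f-s_n$, which vanishes at the nodes. This holds for $h\le c_0$, with $C,c_0$ depending only on $\alpha$, $d$ and the cone parameters, so it is uniform over $B_R$, $R\ge1$, by the reasoning you already gave. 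Then
$\int_{B_{R_n}}|f-s_n|\,d\Pi\le\|\pi\|_{L^2(\mathbb R^d)}\|f-s_n\|_{L^2(B_{R_n})}\le C'h^\alpha\|f\|_{\mathcal H_k}$,
because $\|\pi\|_{L^2(\mathbb R^d)}<\infty$ for both targets and $\|f-s_n\|_{\mathcal H^\alpha}\lesssim\|f\|_{\mathcal H_k}$.
\end{itemize}
With this substitution the rest of your proposal goes through unchanged and gives the stated rates.
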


Several remarks are in order. First, the rate $n^{-\alpha/d+\epsilon}$ matches the minimax-optimal rate for integration in Sobolev spaces of smoothness $\alpha$ in dimension $d$ up to an $\varepsilon>0$ arbitrarily small, recovering the bound of~\citet{briol2019probabilistic}. Second, the sampling distribution is explicit and $n$-dependent: the variance $\Sigma_n \asymp  \log n I_d$, where $\alpha$ can be dropped providing the same rates, ensures that the effective support of the sampling distribution grows with $n$ at a rate that balances the decay of $\Pi$ in the tails against the need to cover the domain where $f$ may have significant mass. Third, the result is kernel-agnostic: the sampling distribution depends only on the target measure $\Pi$, not on the specific kernel $k$ within the Sobolev equivalence class. The construction extends immediately to general Gaussian measures $\Pi = \mathcal{N}(\mu, \Sigma)$ via the linear change of variables $x = \mu + \Sigma^{1/2} y$, which preserves smoothness. The optimal sampling distribution is then $\mathcal{N}(\mu, (\log n) \Sigma)$, a Gaussian with the same mean as $\Pi$ and covariance scaled by $\log n$.

Several remarks are in order also for the Student-$t$ case. First, the rate $n^{-\alpha(\nu+d/2)/(d(\alpha+\nu+d/2))+\varepsilon}$ interpolates naturally between heavy and light tails: as $\nu\to\infty$ the exponent recovers $-\alpha/d+\varepsilon$, matching the Gaussian rate, while small $\nu$ reflects the genuine difficulty of integrating against heavy-tailed distributions. Second, unlike the Gaussian case where $\alpha$ enters the sampling variance only as a multiplicative factor and can be absorbed into the asymptotic constant, here $\alpha$ appears in the exponent of $\Sigma_n \asymp n^{2\alpha/(d(\alpha+\nu+d/2))}I_d$, so the sampling distribution depends explicitly on $\alpha$ and a careful choice of the smoothness parameter is required. Third, the result can be extended beyond Student-$t$ targets via a change of measure. Suppose the target distribution $\Pi$ has polynomially decaying tails with a known rate. We introduce an auxiliary distribution $t_\nu(0, I_d)$ whose tails are heavier than those of $\Pi$, and rewrite $\Pi(f) = \mathbb{E}_{t_\nu}[f(x)\,w(x)], \quad \text{where } w = \frac{d\Pi}{dt_\nu}$. The requirement is that the function $g = f w$ belongs to the reproducing kernel Hilbert space $\mathcal{H}_k$, which is ensured if $w \in \mathcal{H}_k$. Now suppose that $f$ is locally Sobolev of order $\alpha$, has finite $L^\infty(\mathbb{R}^d)$-norm, but does not decay at infinity. In this case, we instead use a Student-$t$ distribution with fewer degrees of freedom, $t_{\nu - d/2 - \varepsilon}$. Writing $g = f \cdot \frac{t_\nu}{t_{\nu - d/2 - \varepsilon}}$, we obtain $g \in \mathcal{H}^\alpha(\mathbb{R}^d)$ for any $\varepsilon > 0$. Consequently, integration under $t_{\nu - d/2 - \varepsilon}$ achieves the convergence rate $n^{-\alpha \nu / (d(\alpha + \nu)) + \varepsilon}$. Conversely, if the mass of $f$ decays at infinity faster than required for $L^2$-integrability, one can choose a proposal distribution with lighter tails, thereby improving the effective regularity of $g$ and potentially accelerating convergence.

The requirement that samples be drawn from a Gaussian with time-varying variance can be relaxed to allow for a sequential sampling scheme, as the following corollary shows.

\begin{corollary}
\label{cor1}
Under the hypotheses of Theorem~\ref{thm1}, suppose $\mathbf{x}_{1:n} = (x_1, \dots, x_n)$ are independent points such that $x_i \sim Q_i$ according to Theorem~\ref{thm1}, where $\Sigma_i \asymp s(i)\, I_d$, with
\begin{enumerate}
    \item $s(i) \asymp \log i$ in the Gaussian target case,
    \item $s(i) \asymp i^{2\nicefrac{\alpha}{\{d(\alpha+\nu+d/2)\}}}$ in the Student--$t$ target case.
\end{enumerate}
Then the same rates as in Theorem~\ref{thm1} hold.
\end{corollary}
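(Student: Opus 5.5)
The plan is to reduce Corollary~\ref{cor1} to the proof of Theorem~\ref{thm1} by showing that the two ingredients used there survive when each point has its own law $Q_i$. The first ingredient is a high-probability bound on the fill distance $h_n$ of $\mathbf{x}_{1:n}$ over a growing ball $B_{R_n}$. The second is a tail bound for the part of $\Pi$ outside $B_{R_n}$. The proof of Theorem~\ref{thm1} only uses $Q_n$ through lower bounds on the density of $Q_n$ on $B_{R_n}$, and the tail estimate involves only $\Pi$ and $R_n$. So it suffices to show that the non-identically distributed sample has the same covering behaviour on $B_{R_n}$, up to constants and an extra $n^{\varepsilon}$ factor that is absorbed into the $\varepsilon$ of the rate.

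\textbf{Step 1: discard the early points.} Write $m=\lceil n/2\rceil$ and keep only the points $x_m,\dots,x_n$. This can only increase the fill distance, since the kernel interpolant with more nodes has smaller power function, hence $\|\mu_\Pi-\mu_n\|_{\mathcal H_k}$ is monotone in the design. For $i\ge m$ we have $s(i)\asymp s(n)$ in both cases:
\begin{itemize}
\item in the Gaussian case, $\log i\ge \log n-\log 2$;
\item in the Student case, $i^{\gamma}\ge 2^{-\gamma}n^{\gamma}$ with $\gamma=2\alpha/\{d(\alpha+\nu+d/2)\}$.
\end{itemize}
Hence every $Q_i$ with $i\ge m$ has covariance $\Sigma_i\asymp s(n)I_d$, with constants uniform in $i$ and $n$.

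\textbf{Step 2: uniform density lower bound.} On $B_{R_n}$ I will show that the densities $q_i$ satisfy $q_i(x)\ge c\,q^{\flat}_n(x)$ uniformly in $i\in[m,n]$, where $q^{\flat}_n$ is a fixed reference density of the same family with scale $c_1 s(n)$. For the Gaussian case this follows from
\begin{equation*}
q_i(x)\ge (2\pi c_2 s(n))^{-d/2}\exp\!\bigl(-\|x\|^2/(2c_1 s(n))\bigr),
\end{equation*}
which is of the form used in Theorem~\ref{thm1} with a modified constant in $\Sigma_n\asymp\alpha\log n\, I_d$. The Student case is analogous, using the polynomial form of the density.

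\textbf{Step 3: covering with independent, non-identical points.} Cover $B_{R_n}$ by $N\lesssim (R_n/h)^d$ cubes of side $h$. Each cube $A$ has $Q_i(A)\ge p_n:=c\,h^d\inf_{B_{R_n}}q^{\flat}_n$ for all $i\ge m$. By independence, the probability that $A$ is empty is at most $(1-p_n)^{n/2}\le e^{-np_n/2}$. A union bound gives
\begin{equation*}
\mathbb P(h_n>\sqrt d\,h)\le N e^{-np_n/2},
\end{equation*}
which is exactly the bound obtained in the i.i.d.\ proof with $n$ replaced by $n/2$. Choosing $h,R_n$ as in Theorem~\ref{thm1} then yields the same polynomially small failure probability. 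The error decomposition of Theorem~\ref{thm1} gives
\begin{itemize}
\item the interpolation error on $B_{R_n}$, of order $h^\alpha$;
\item the tail error, controlled by $\Pi(\|x\|>R_n)$ and $C_k$;
\end{itemize}
so $\|\mu_\Pi-\mu_n\|_{\mathcal H_k}=\mathcal O_P(n^{-\alpha\tau+\varepsilon})$. The contraction statement follows as before from the Gaussian posterior variance being this squared quantity, via a Gaussian tail bound.

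\textbf{Main obstacle.} I expect the hard step to be checking that the proof of Theorem~\ref{thm1} really uses $Q_n$ only through density lower bounds on $B_{R_n}$. If instead it uses exact identity of laws, for example a moment computation under $Q_n$ or a bound requiring $Q_n$ to dominate $\Pi$ globally, I would re-derive that piece for the mixture $\frac{1}{n}\sum_i Q_i$. The key point is that this mixture dominates $\frac12\min_{i\ge m}q_i$, so the same Step 2 bound applies.
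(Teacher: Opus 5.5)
Your proposal is correct and follows essentially the same route as the paper. The paper also retains only a constant fraction of late indices ($i\ge n^{\theta}$ in the Gaussian case, $i\ge\theta n$ in the Student case) and uses $\prod_i(1-p_{h,i}(x))\le\exp(-\sum_i p_{h,i}(x))$; it then shows $q_i(R_ne)\gtrsim q_n(R_ne)$ for those indices, so that $\sum_i q_i(R_ne)\asymp nq_n(R_ne)$ feeds into the i.i.d.\ covering bound unchanged, which is exactly your discard-plus-uniform-density-lower-bound step.

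One caution on your aside about the tail. Theorem~\ref{thm1} controls that term by Cauchy--Schwarz, $\|f-s_n\|_{L^2}\|\pi\|_{L^2(B_{R_n}^c)}\lesssim\|f\|_{\mathcal{H}^\alpha}R_n^{-\nu-d/2}$, not by $C_k$ and $\Pi(B_{R_n}^c)\asymp R_n^{-\nu}$; the latter would lose the $d/2$ in the Student-$t$ exponent. Since this term does not depend on the design, your reduction is unaffected.
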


\subsection{Minimax Optimality for the Student-\textit{t} Target}
\label{sec:lower_bound}
 
Theorem~\ref{thm1} establishes that the sampling distribution $Q_n = t_\nu(0,\Sigma_n)$ with $\Sigma_n \asymp n^{\frac{2\alpha}{d(\alpha+\nu+d/2)}}I_d$ achieves worst-case integration error $\mathcal{O}_P(n^{-\frac{\alpha(\nu+d/2)}{d(\alpha+\nu+d/2))}+\varepsilon})$ in the case of a Student-$t$ target. We now show that it is the minimax rate up to the $\varepsilon$-correction. The proof of the Theorem is deferred to Appendix \ref{appendixA}.
 
\begin{theorem}
\label{thm:lower}
Let $\alpha > d/2$, $\nu > 0$, $R \ge 1$, and let $p_\nu: \mathbb{R}^d \to (0, \infty)$ satisfy \begin{equation*}
    c_1(1+\|x\|_2^2)^{-(\nu+d)/2} \le p_\nu(x) \le c_2(1+\|x\|_2^2)^{-(\nu+d)/2}.
\end{equation*} There exists $c > 0$ such that for any $n \in \mathbb{N}$, for any quadrature rule with $n$ nodes $Q\in\mathcal{Q}_n$, the minimax quadrature error over the unit ball in $H^\alpha(\mathbb{R}^d)$ satisfies:
\[
\mathcal{E}_n=\inf_{Q \in \mathcal{Q}_n} \sup_{\|f\|_{H^\alpha} \le 1} \left| \int_{\mathbb{R}^d} f(x)p_\nu(x)dx - \sum_{i=1}^n w_i f(x_i) \right| \ge c \, n^{-\frac{\alpha(\nu+d/2)}{d(\alpha+\nu+d/2)}}.
\]
\end{theorem}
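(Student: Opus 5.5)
We prove the lower bound by the standard fooling (bump-function) argument. We fix an arbitrary rule $Q\in\mathcal{Q}_n$ with nodes $x_1,\dots,x_n$ and weights $w_i$. The aim is to construct $f$ with $\|f\|_{H^\alpha}\le 1$, $f(x_i)=0$ for all $i$, and $\int f p_\nu \gtrsim n^{-\alpha(\nu+d/2)/(d(\alpha+\nu+d/2))}$. The quadrature then returns $0$ while the integral does not, so the error is at least this integral uniformly over $Q$.

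The construction has two free parameters: a radius $R_n$ (the region where the bumps live) and a bump width $h$. We fix a smooth nonnegative bump $\psi\in C_c^\infty(B(0,1/2))$ with $\int\psi>0$. We tile the annulus $A_n=\{R_n\le\|x\|_2\le 2R_n\}$ by disjoint cubes of side $h$. There are $M\asymp (R_n/h)^d$ of them. We take $M=2n$, so at least $n$ cubes contain no node, and we set $f=\lambda\sum_{j\in J}\psi((x-z_j)/h)$ over $n$ empty cubes with centres $z_j$.

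The bumps have disjoint supports, so the Sobolev norm is computed cube by cube. The top-order derivatives dominate for $h\le1$, giving
\[
\|f\|_{H^\alpha}^2\asymp \lambda^2\, n\, h^{d-2\alpha}.
\]
We therefore choose $\lambda\asymp n^{-1/2}h^{\alpha-d/2}$ to normalize. By the lower density bound, $p_\nu\ge c_1(1+4R_n^2)^{-(\nu+d)/2}\gtrsim R_n^{-(\nu+d)}$ on $A_n$. Hence
\[
\int f p_\nu \gtrsim \lambda\, n\, h^d R_n^{-(\nu+d)} \asymp n^{1/2} h^{\alpha+d/2} R_n^{-(\nu+d)}.
\]

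Now we substitute the constraint $h\asymp R_n n^{-1/d}$. This gives a bound of the form
\[
n^{1/2-(\alpha+d/2)/d}\, R_n^{\alpha+d/2-\nu-d} = n^{-\alpha/d}\, R_n^{\alpha-\nu-d/2},
\]
which we then optimize over $R_n\ge1$. This is the step I expect to be the main obstacle. If $\alpha<\nu+d/2$, taking $R_n=1$ simply returns the Gaussian-like rate $n^{-\alpha/d}$, which is faster than the claimed one. So the naive "all bumps in one annulus" argument does not produce the claimed rate by itself.

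The claimed exponent $\alpha(\nu+d/2)/(d(\alpha+\nu+d/2))$ is exactly what one gets by balancing
\[
n^{-\alpha/d}R^{\alpha/\ldots}\quad\text{against}\quad R^{-(\nu+d/2)}.
\]
The second term is the tail contribution of a single wide low-frequency bump placed far out; a computation shows it decays as $R^{-(\nu+d)}R^{d/2}$, with $L^2$-normalized width $R$. So I would take the maximum of two fooling functions:
\begin{enumerate}
\item a smooth function at scale $R$ supported in $\{\|x\|\ge R\}$, which captures the tail mass;
\item the bump construction above, at resolution $h=Rn^{-1/d}$ inside $B(0,R)$.
\end{enumerate}
Then $R$ is chosen so that the nodes cannot both resolve the bulk at scale $h$ and cover the tail. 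Concretely, if the rule places $m$ nodes outside $B(0,R)$, the tail function built from $\sim m$ empty subregions yields error $\gtrsim$ its integral. One then minimizes the worse of the two errors over the split $m$ vs.\ $n-m$.

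I would carry out this balancing explicitly: set $R=n^{\gamma}$ and equate the exponents of the two regimes, checking that $\gamma=1/(d(\alpha+\nu+d/2))$-type choices reproduce the stated exponent. The delicate point is justifying rigorously that the tail fooling function has integral of order $R^{-(\nu+d/2)}$ rather than something smaller. If this does not close, the correct conclusion may be that the stated lower bound is weaker than the achievable upper bound, but it still holds as a (possibly non-sharp) lower bound whenever it is dominated by the bound we do prove.
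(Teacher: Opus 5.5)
Your proposal stops exactly where the real work is, and the missing step cannot be supplied. The point you flag as delicate is harmless. If all nodes lie in $B_R$, the bump $R^{-d/2}\psi((x-y)/R)$ with $\|y\|_2=3R$ has $H^\alpha$-norm $\asymp 1$, because for $R\ge 1$ the zeroth-order term dominates. Its integral is $\asymp R^{-d/2}R^{d}R^{-(\nu+d)}=R^{-\nu-d/2}$; this is Lemma~\ref{lemma_minmax}. What fails is the bulk half of your balancing: you would need every rule with nodes in $B_R$ to incur error $\gtrsim R^{\alpha}n^{-\alpha/d}$ inside $B_R$. The paper closes the same two-regime argument in exactly this way, via \eqref{eq:fill_lb}--\eqref{eq:fill_vol}: error $\ge c\,h_{\mathbf{x}_{1:n},B_{R_n}}^{\alpha}$ with $c$ independent of $R_n$, plus $h_{\mathbf{x}_{1:n},B_{R_n}}\ge cR_nn^{-1/d}$. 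That inequality is not valid for weighted integration. As your own normalisation shows, a hole of radius $h\le 1$ centred at $z$ only produces a fooling function with integral $\asymp h^{\alpha+d/2}p_\nu(z)$. A rule is free to leave its large holes where $p_\nu$ is small. So $R^\alpha n^{-\alpha/d}$ is the bulk error of a design that is quasi-uniform on $B_R$, not of an arbitrary design, and optimising over the split $m$ versus $n-m$ cannot give more than your annulus construction already gives.

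In fact your first computation is the sharp answer, and the stated bound is false. Optimising $n^{-\alpha/d}R^{\alpha-\nu-d/2}$ over $1\le R\lesssim n^{1/d}$ (the range where $h\le1$) gives $\mathcal{E}_n\gtrsim n^{-\min(\alpha,\beta)/d}$ with $\beta=\nu+d/2$. Density-graded designs attain this order, up to at most a logarithm when $\alpha=\beta$. Since $\alpha\beta/(\alpha+\beta)<\min(\alpha,\beta)$, the claimed $c\,n^{-\alpha\beta/(d(\alpha+\beta))}$ exceeds the minimax error for large $n$. Your fallback sentence therefore has the inequality backwards: a lower bound of order $n^{-\min(\alpha,\beta)/d}$ never implies the larger one. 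A concrete check: take $d=1$, $\alpha=1$, $\nu=5$, where the claim reads $\mathcal{E}_n\ge c\,n^{-11/13}$. Put at most $n$ nodes on $|x|\le n^{1/4}$ with local spacing $h(x)\asymp n^{-1}(1+x^2)^{2}$, which is possible since $\int(1+x^2)^{-2}\,dx<\infty$. Integrate the piecewise-linear interpolant $s$ against $p_\nu$. On each cell $I_j$, $|f-s|\le h_j^{1/2}\|f'\|_{L^2(I_j)}$, so the bulk error is at most $\bigl(\sum_j h_j^{3}\max_{I_j}p_\nu^{2}\bigr)^{1/2}\lesssim\bigl(n^{-2}\int(1+x^2)^{-2}\,dx\bigr)^{1/2}\asymp n^{-1}$. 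The tail is at most $\|p_\nu\|_{L^2(|x|>n^{1/4})}\lesssim n^{-11/8}$, hence $\mathcal{E}_n=O(n^{-1})$. What you can legitimately prove is your annulus bound, and the claimed rate only for designs that are quasi-uniform on some ball.
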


In the Gaussian case, a similar argument yields the classical bounded case rate $n^{-s/d}$ up to logarithmic factors. More broadly, the techniques used in the proof of the lower bound can also be employed to show the suboptimality of the strategy that samples directly from the target measure in unbounded domains. This contrasts with the case of bounded domains, where such a strategy can achieve the optimal rate. The arguments are particularly well-suited to measures with radial densities. In this case, balls are extremal in the sense that they minimize volume for a given radius while simultaneously capturing the largest possible mass. This property allows the error analysis to be reduced to estimates over balls. For more general measures, such a reduction need not hold, and the geometry of optimal sets can be considerably more complex. Finally, the $\varepsilon$-loss in the upper bound is due to random sampling. A quasi-uniform design restricted to the relevant region, namely a ball centered at the origin with radius $n^{\frac{\alpha}{d(\alpha+\nu+d/2)}}$, would yield the sharp minimax rate without the $\varepsilon$-correction. However, such constructions would be practically not stable.

\subsection{Posterior Variance Decomposition}

Let $\textbf{f}_{1:n}=(f(x_1),\dots,f(x_n))^\top$ be the evaluations at $\mathbf{x}_{1:n}=(x_1,\dots,x_n)^\top$ sampled independently from a proposal distribution $Q$. Then, by the law of total variance, the overall variance can be decomposed as
\[
\mathrm{Var}_{f^\star, \mathbf{x}_{1:n}}\left[\Pi(f^\star)\right] = \mathbb{E}_{\mathbf{x}_{1:n}}\left[\mathrm{Var}_{f^\star}[\Pi(f^\star) \mid \mathbf{x}_{1:n}\,]\right]+ \mathrm{Var}_{\mathbf{x}_{1:n}}\left[\mathbb{E}_{f^\star}[\Pi(f^\star) \mid \mathbf{x}_{1:n}\,]\right].
\]

The first term corresponds to the expected posterior variance, conditioned on the sampled locations, which can be controlled using Theorem \ref{thm1}, by boundedness of the posterior variance and that convergence results hold with high probability. The second term captures the additional variability introduced by the randomness of the nodes $\mathbf{x}_{1:n}$ themselves. This decomposition highlights that convergence rates established in Theorem~\ref{thm1} must account not only for the posterior variance, but also for the stochasticity of the point selection process. Intuitively, these two terms can be controlled by the expected posterior variance and the norm of $f$. To formalize this intuition, we present the following proposition, whose proof is deferred to Appendix \ref{appendixA}. 

\begin{proposition}
\label{prop:variance_control}
Let $f \in \mathcal{H}^\alpha$, with $\mathcal{H}^\alpha$ norm-equivalent to the RKHS $\mathcal{H}_k$ associated with kernel $k$. Let $\Pi(f^\star)\mid\mathbf{x}_{1:n}$ denote the Bayesian quadrature posterior of the integral based on $n$ points $\mathbf{x}_{1:n}=(x_1,\dots,x_n)^\top$ sampled from a proposal distribution $Q$. Then there exists a constant $C>0$ such that
\[
\mathrm{Var}_{f^\star, \mathbf{x}_{1:n}}\left[\Pi(f^\star)\right] \le C \big(1 + \|f\|_{\mathcal{H}^\alpha}^2\big)\, \mathbb{E}_{\mathbf{x}_{1:n}}\left[\mathrm{Var}_{f^\star}\left[\Pi(f^\star) \mid \mathbf{x}_{1:n}\,\right]\right].
\]
\end{proposition}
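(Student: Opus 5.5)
We start from the law of total variance decomposition displayed just before the statement:
\[
\mathrm{Var}_{f^\star, \mathbf{x}_{1:n}}[\Pi(f^\star)] = \mathbb{E}_{\mathbf{x}_{1:n}}\big[\mathrm{Var}_{f^\star}[\Pi(f^\star)\mid \mathbf{x}_{1:n}]\big] + \mathrm{Var}_{\mathbf{x}_{1:n}}\big[\widehat{\Pi}_n(f)\big],
\]
where we use that the conditional posterior mean equals $\widehat{\Pi}_n(f)=\sum_j w_j f(x_j)$, since the observations coincide with $f(x_j)$. The first term already appears on the right-hand side of the claim with constant $1 \le C(1+\|f\|^2_{\mathcal H^\alpha})$. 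It therefore remains to bound the second term by a multiple of $\|f\|^2_{\mathcal H^\alpha}$ times the expected posterior variance.

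For the second term, the plan is to centre the posterior mean at the deterministic quantity $\Pi(f)$. Since the variance is the minimal mean squared deviation over constants,
\[
\mathrm{Var}_{\mathbf{x}_{1:n}}[\widehat{\Pi}_n(f)] \le \mathbb{E}_{\mathbf{x}_{1:n}}\big[(\widehat{\Pi}_n(f)-\Pi(f))^2\big].
\]
We then apply the worst-case characterisation recalled in Section~\ref{sec-background}. For $f\in\mathcal H_k$ we have $\Pi(f)-\widehat{\Pi}_n(f)=\langle f,\mu_\Pi-\mu_n\rangle_{\mathcal H_k}$, so Cauchy--Schwarz gives
\[
|\Pi(f)-\widehat{\Pi}_n(f)|^2 \le \|f\|_{\mathcal H_k}^2\,\|\mu_\Pi-\mu_n\|_{\mathcal H_k}^2 = \|f\|_{\mathcal H_k}^2\,\mathrm{Var}_{f^\star}[\Pi(f^\star)\mid\mathbf{x}_{1:n}].
\]
Norm equivalence then gives $\|f\|_{\mathcal H_k}^2\le c_{\mathrm{eq}}\|f\|_{\mathcal H^\alpha}^2$ for a constant $c_{\mathrm{eq}}$ depending only on $k$. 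Taking expectations over $\mathbf{x}_{1:n}\sim Q^{\otimes n}$ and adding the first term yields the bound with $C=\max(1,c_{\mathrm{eq}})$.

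The main point requiring care is measurability and integrability. The quantity $\|\mu_\Pi-\mu_n\|^2_{\mathcal H_k}$ must be a measurable function of $\mathbf{x}_{1:n}$, and the weights must be well defined almost surely, which requires $\mathbf K$ to be invertible. Invertibility holds $Q$-almost surely for a strictly positive definite kernel and absolutely continuous $Q$, since distinct points then occur with probability one. Integrability is straightforward because the posterior variance is bounded by the prior variance $\Pi^{\otimes 2}(k)\le C_k$. A smaller point is that the interpretation of $f^\star$ in the total-variance formula must be made explicit: conditionally on $\mathbf{x}_{1:n}$ the posterior is Gaussian, centred at $\widehat{\Pi}_n(f)$ with variance $\|\mu_\Pi-\mu_n\|^2_{\mathcal H_k}$, and this conditional mean is exactly what the second term uses.
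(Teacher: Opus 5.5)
Your proposal is correct and follows the paper's argument: the law of total variance, then bounding the variance of the posterior mean by its second moment about $\Pi(f)$ (the paper writes this as $\mathrm{Var}_{\mathbf{x}_{1:n}}[\langle f,\varepsilon_{\mathbf{x}_{1:n}}\rangle_{\mathcal{H}_k}] \le \mathbb{E}_{\mathbf{x}_{1:n}}[\langle f,\varepsilon_{\mathbf{x}_{1:n}}\rangle_{\mathcal{H}_k}^2]$), then Cauchy--Schwarz against $\|\mu_\Pi-\mu_n\|_{\mathcal{H}_k}^2$, and finally norm equivalence. Your remarks on almost-sure invertibility of $\mathbf{K}$ and on integrability via the prior-variance bound are sound and are somewhat more careful than the paper's version.
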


Combining Proposition \ref{prop:variance_control} with the convergence rates of Theorem \ref{thm1}, we obtain a characterization of the rate at which the posterior variance of the Bayesian quadrature integral decays, considering uncertainty arising both from the Gaussian process $f$ and from the random selection of points. The result is formalized in the following Corollary, with the proof provided in Appendix \ref{appendixA}.

\begin{corollary}
\label{cor_postvar}
Under the assumptions of Theorem \ref{thm1}, the posterior variance of $\Pi(f^\star)$ for $f \in \mathcal H_k$, norm-equivalent to the Sobolev space $\mathcal{H}^\alpha$, satisfies:
\[
\mathrm{Var}_{f^\star, \mathbf{x}_{1:n}}\left[\Pi(f^\star)\right] = \mathcal O\big(n^{-2\alpha\tau + \varepsilon}\big),
\]
where $\tau$ is as defined in  Theorem \ref{thm1}.
\end{corollary}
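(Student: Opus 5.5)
We plan to combine Proposition~\ref{prop:variance_control} with the high-probability rate of Theorem~\ref{thm1}. Proposition~\ref{prop:variance_control} reduces the total variance to a constant multiple of the expected conditional variance:
\[
\mathrm{Var}_{f^\star,\mathbf{x}_{1:n}}[\Pi(f^\star)] \le C(1+\|f\|_{\mathcal H^\alpha}^2)\,\mathbb E_{\mathbf{x}_{1:n}}\big[\mathrm{Var}_{f^\star}[\Pi(f^\star)\mid \mathbf{x}_{1:n}]\big].
\]
Since $f\in\mathcal H_k$ and $\mathcal H_k$ is norm-equivalent to $\mathcal H^\alpha$, the prefactor is a finite constant. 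It therefore suffices to show that $\mathbb E_{\mathbf{x}_{1:n}}[\|\mu_\Pi-\mu_n\|_{\mathcal H_k}^2]=\mathcal O(n^{-2\alpha\tau+\varepsilon})$, using the identity $\mathrm{Var}_{f^\star}[\Pi(f^\star)\mid\mathbf{x}_{1:n}]=\|\mu_\Pi-\mu_n\|_{\mathcal H_k}^2$ from Section~\ref{sec-background}.

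To bound this expectation, we split it on the event $A_n=\{\|\mu_\Pi-\mu_n\|_{\mathcal H_k}\le C n^{-\alpha\tau+\varepsilon/2}\}$. By Theorem~\ref{thm1}, applied with $\varepsilon/2$ in the sense of $\mathcal O_P$ with polynomially high probability, we have $\mathbb P(A_n^c)\le n^{-c}$ for some $c>0$. On $A_n$ the integrand is at most $C^2 n^{-2\alpha\tau+\varepsilon}$.

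On $A_n^c$ we use a deterministic bound. The weights are optimal, so $\|\mu_\Pi-\mu_n\|_{\mathcal H_k}^2\le \|\mu_\Pi\|_{\mathcal H_k}^2=\Pi^{\otimes2}(k)\le C_k$, because choosing zero weights is admissible and the posterior variance never exceeds the prior variance. This gives
\[
\mathbb E[\|\mu_\Pi-\mu_n\|^2_{\mathcal H_k}]\le C^2n^{-2\alpha\tau+\varepsilon}+C_k\,n^{-c}.
\]

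The main obstacle is that the exponent $c$ in the high-probability statement must satisfy $c\ge 2\alpha\tau-\varepsilon$; otherwise the tail term dominates. We would handle this by revisiting the covering argument underlying Theorem~\ref{thm1}. There, the failure probability comes from a union bound over $N\asymp (R_n/h)^d$ cells of a ball of radius $R_n$, each left empty with probability roughly $(1-q_{\min}h^d)^n\le \exp(-n q_{\min} h^d)$. With $h=n^{-\tau+\varepsilon'}$, the product $n q_{\min}h^d$ grows like a positive power of $n$ in both the Gaussian and the Student-$t$ case, given the choices of $\Sigma_n$. The failure probability is therefore in fact $\exp(-n^{\eta})$ for some $\eta>0$, which is smaller than any polynomial. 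The remaining contribution, from the region outside the ball of radius $R_n$, is deterministic and already accounted for in the rate.

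Consequently $c$ can be taken arbitrarily large, the tail term is $o(n^{-2\alpha\tau})$, and the corollary follows. If only polynomial probability were available, we would instead enlarge the threshold constant $C$, which increases $c$, to reach the same conclusion.
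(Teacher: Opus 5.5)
Your proposal is correct and follows essentially the same route as the paper. Both arguments reduce via Proposition~\ref{prop:variance_control} to bounding $\mathbb{E}_{\mathbf{x}_{1:n}}[\|\mu_\Pi-\mu_n\|_{\mathcal H_k}^2]$, split on the high-probability event from Theorem~\ref{thm1}, bound the conditional variance by the prior variance on the complement, and take the polynomial exponent $c$ larger than the rate exponent. The only difference is that the paper simply asserts the failure probability is $n^{-c}$ for any $c>0$ (which is what the choice $A>(c+1)/v_d$ in Lemma~\ref{lemma_fill.dist.general} gives), whereas you justify this explicitly and additionally observe, via the $\varepsilon$-slack, that the failure probability is in fact super-polynomially small.
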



\section{Numerical Experiments}\label{sec_numexp}

We consider the integration problem $\Pi(f) = \int_{\mathbb{R}} f(x)\, \Pi(dx)$, where $\Pi = \mathcal{N}(0,1)$ is the standard Gaussian measure and $f(x) = \sqrt{3}e^{-x^2} + \sin(2\pi x)/(1+x^2)$, similarly to \cite{pmlr-v70-briol17a} but which decays at infinity in order to have a function in $\mathcal{H}^\alpha(\mathbb{R}^d)$, for any $\alpha\ge1$. We model $f$ with a Gaussian process prior $f^\star \sim \mathcal{GP}(0,k)$, using a Radial Basis Function (RBF) kernel \citep{rasmussen2006gaussian}:
\[
k(x,x') = \sigma_f^2 
\exp\!\Big(-\frac{(x-x')^2}{2\ell^2}\Big),
\]
where $\ell > 0$ is the lengthscale and $\sigma_f^2 > 0$ the prior variance.  Since both the kernel and the measure are Gaussian, the kernel mean 
\[
m_\Pi(x) = \int_{\mathbb{R}} k(x,x') \, d\Pi(x')
=
\sigma_f^2 \sqrt{\frac{\ell^2}{\ell^2 + 1}} 
\exp\!\Big(-\frac{x^2}{2(\ell^2 + 1)}\Big)
\]
is available in closed form. Similarly, the prior variance of the integral
\[
\mathrm{Var}[\Pi(f^\star)] = \int_{\mathbb{R}} \int_{\mathbb{R}} k(x,x') \, d\Pi(x)\, d\Pi(x') 
= \sigma_f^2 \sqrt{\frac{\ell^2}{\ell^2 + 2}}
\]
is also analytic in this Gaussian--Gaussian setting, see table 10.1 in \cite{hennig2022probabilistic} for a non-exhaustive list of distribution and kernel that provide a closed-form expression for both the kernel mean and the prior variance. The lengthscale $\ell$ controls the smoothness of the prior: smaller values allow faster variation, while larger values enforce smoother functions, whereas the signal variance $\sigma_f^2$ sets the overall scale of variability, determining the magnitude of prior and posterior uncertainty.

\subsection{Hyperparameter Estimation}

Let $\theta := (\sigma_f^2, \ell) \in (0,\infty)^2$ and $\mathbf{x}_{1:n}=(x_1,\dots,x_n)\stackrel{\mathrm{i.i.d.}}{\sim}\mathcal{N}(0,\log n)=Q_n$ according to Theorem \ref{thm1}. Denote by $K_\theta = \sigma_f^2 K_\ell$, with $[K_\ell]_{i,j} = \exp\big(-\frac{(x_i - x_j)^2}{2\ell^2}\big)$. 
We place priors $\sigma_f^2 \sim \text{IG}(\alpha_f, \beta_f)$ and $\ell \sim p(\ell)$.  
Our goal is to characterize the full conditional marginals. By Bayes’ formula, the conditional posterior of $\sigma_f^2$ given $\ell$ and observations $\textbf{f}_{1:n}$ is
\[
p(\sigma_f^2 \mid \textbf{f}_{1:n}, \ell) \propto p(\textbf{f}_{1:n} \mid \sigma_f^2, \ell)\, p(\sigma_f^2).
\]
Letting $Q = \textbf{f}_{1:n}^\top K_\ell^{-1} \textbf{f}_{1:n} / 2$, we obtain
\[
\begin{aligned}
p(\sigma_f^2 \mid \textbf{f}_{1:n}, \ell)
&\propto (\sigma_f^2)^{-n/2} \exp\Big(-\frac{Q}{\sigma_f^2}\Big) \, (\sigma_f^2)^{-(\alpha_f+1)} \exp\Big(-\frac{\beta_f}{\sigma_f^2}\Big) \\
&\propto (\sigma_f^2)^{-(\alpha_f + n/2 + 1)} \exp\Big(-\frac{\beta_f + Q}{\sigma_f^2}\Big),
\end{aligned}
\]
which corresponds to an inverse-gamma distribution $\text{IG}(\alpha_f + n/2, \beta_f + Q)$. Similarly, the full conditional of $\ell$ is
\[
p(\ell \mid \textbf{f}_{1:n}, \sigma_f^2) \propto p(\textbf{f}_{1:n} \mid \sigma_f^2, \ell) \, p(\ell) \propto p(\ell) \, |K_\ell|^{-1/2} \exp\Big(-\frac{Q}{\sigma_f^2}\Big),
\]
which does not have a closed form and can be sampled efficiently using a Metropolis--Hastings step. 

\begin{proposition}
\label{prop:ergodicity}
Let $\theta = (\sigma_f^2, \ell) \in (0,\infty)^2$ and consider the posterior 
\begin{equation*}
    \pi(\sigma_f^2, \ell \mid \mathbf{f}_{1:n})\propto p(\mathbf{f}_{1:n} \mid \sigma_f^2, \ell)\, p(\sigma_f^2)\, p(\ell),
\end{equation*}
where $p(\sigma_f^2) = \mathrm{IG}(\alpha_f,\beta_f)$ and $p(\ell)$ is a proper, continuous prior on $(0,\infty)$.
At each iteration, the sampler alternates between
\[
\sigma_f^2 \mid \ell, \mathbf{f}_{1:n} \sim
\mathrm{IG}(\alpha_f + n/2, \beta_f + Q),
\qquad
Q = \tfrac{1}{2}\mathbf{f}_{1:n}^\top K_\ell^{-1}\mathbf{f}_{1:n},
\]
and a Metropolis--Hastings update for $\ell$ with invariant density
\[\pi(\ell \mid \sigma_f^2, \mathbf{f}_{1:n}) \propto p(\ell)
  |K_\ell|^{-1/2}\exp(-Q/\sigma_f^2),
  \]
using any proposal $q(\ell,\ell')>0$ on $(0,\infty)^2$.
Then the resulting chain $\{\theta_t\}$ is aperiodic, Harris recurrent,
and ergodic with invariant distribution $\pi(\sigma_f^2,\ell \mid \mathbf{f}_{1:n})$.
\end{proposition}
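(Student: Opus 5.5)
The plan is to treat the sampler as a deterministic-scan Metropolis-within-Gibbs chain and to apply the standard toolkit for such chains. First, invariance: the $\sigma_f^2$-update is an exact draw from the full conditional, which is the inverse-gamma computed above, so it leaves $\pi$ invariant. The $\ell$-update is a Metropolis--Hastings kernel with target $\pi(\ell\mid\sigma_f^2,\mathbf f_{1:n})$, so it satisfies detailed balance with respect to that conditional and hence also preserves the joint $\pi$. The composition of the two kernels is therefore $\pi$-invariant.

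Before this step I will check that $\pi$ is a proper probability density, so that the conclusion is meaningful. $K_\ell$ is positive definite for distinct nodes (Gaussian kernel) and the $x_i$ are almost surely distinct, so $Q>0$ almost surely when $\mathbf f_{1:n}\neq 0$. Integrating out $\sigma_f^2$ gives
\[
\pi(\ell\mid\mathbf f_{1:n})\propto p(\ell)\,|K_\ell|^{-1/2}(\beta_f+Q(\ell))^{-(\alpha_f+n/2)}.
\]
This is continuous in $\ell$. Its integrability needs checking, because $|K_\ell|\to 0$ as $\ell\to\infty$. I expect to handle this by bounding $|K_\ell|^{-1/2}(\beta_f+Q)^{-(\alpha_f+n/2)}$ above, using $Q\ge \tfrac12\|\mathbf f\|^2/\lambda_{\max}(K_\ell)\ge \|\mathbf f\|^2/(2n)$. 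This bound does not by itself control $|K_\ell|^{-1/2}$. If a clean bound fails, I will state properness as an implicit assumption, or note that it holds whenever the normalizing integral is finite, as in standard GP hyperparameter practice.

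Next, irreducibility and aperiodicity. The density $\pi$ is strictly positive and continuous on $(0,\infty)^2$. The MH proposal $q>0$ everywhere gives the $\ell$-kernel a transition density that is positive on $(0,\infty)$, namely $q(\ell,\ell')$ times a positive acceptance probability. The inverse-gamma update also has a positive density on $(0,\infty)$. Hence the one-step kernel $P=P_{\sigma}P_{\ell}$ has an absolutely continuous component with a strictly positive density with respect to Lebesgue measure on $(0,\infty)^2$. This makes the chain $\pi$-irreducible (indeed Lebesgue-irreducible). It is also aperiodic, since any set of positive measure is reachable in one step from every point, which rules out cyclic decomposition. The main technical care here is the order of composition: after the $\sigma$-then-$\ell$ scan, the joint density of $(\sigma_f'^2,\ell')$ is $\mathrm{IG}(\sigma_f'^2;\ell)\,q(\ell,\ell')\,a(\cdot)$ plus rejection mass. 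I only need the positive continuous part for irreducibility. A two-step kernel $P^2$ also works if one-step positivity in both coordinates is awkward.

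Finally, Harris recurrence and ergodicity. A $\pi$-irreducible chain with invariant probability $\pi$ is positive recurrent. To upgrade to Harris recurrence, I will use Tierney's (1994) result that Metropolis--Hastings and Gibbs kernels, and their compositions, are Harris recurrent when $\pi$-irreducible. The underlying reason is that every kernel component is absolutely continuous with respect to $\pi$ apart from the rejection atom, which stays at the current point. Alternatively, I can argue directly: each step's kernel is dominated below by a Lebesgue-absolutely-continuous part, so every bounded harmonic function is constant everywhere rather than just $\pi$-a.e. Combining aperiodicity and Harris recurrence, the Meyn--Tweedie aperiodic ergodic theorem gives $\|P^t(\theta,\cdot)-\pi\|_{TV}\to0$ for every starting point. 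The strong law of large numbers for ergodic averages follows as well. I expect the Harris step, specifically the handling of the rejection atom in the MH component, to be the main obstacle, and I will resolve it by citing Tierney's Corollary 2 on Metropolis-within-Gibbs hybrids.
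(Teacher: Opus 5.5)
\textbf{Where the proposal breaks: Harris recurrence.} Your invariance, irreducibility and aperiodicity steps are sound. The one-step density $\mathrm{IG}(\sigma'^2\mid\ell)\,q(\ell,\ell')\,a(\ell,\ell';\sigma'^2)>0$ on $(0,\infty)^2$ is exactly what is needed. It also gives aperiodicity more convincingly than the paper's appeal to rejection-induced self-transitions: the joint chain never stays put, because $\sigma_f^2$ is redrawn from a continuous law at every step.

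The gap is the Harris recurrence step, which you rightly call the crux but resolve with tools that do not apply. Tierney's Corollary~2 is about a full-dimensional Metropolis kernel, whose rejection part is $r(x)\delta_x$. Its proof rests on the identity $h(x)=(1-r(x))c+r(x)h(x)$ for a bounded harmonic $h$ equal to $c$ $\pi$-a.e. In $P_\sigma P_\ell$ the rejection part is not an atom at the current point. Since $\ell$ is held while $\sigma_f^2$ is resampled, it is a measure spread over the $\pi$-null line $L_\ell=\{(s,\ell):s>0\}$, and that identity is unavailable.

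There is also no general theorem for such compositions: Roberts and Rosenthal (2006) give a two-dimensional Metropolis-within-Gibbs chain that is $\pi$-irreducible and $\pi$-invariant but not Harris recurrent. Your fallback (``an absolutely continuous lower component forces bounded harmonic functions to be constant everywhere'') fails in this generality too. Take $P(x,\cdot)=\epsilon(x)\pi+(1-\epsilon(x))\delta_{T(x)}$, with $T$ mapping a countable orbit $N$ into itself, $\epsilon=1$ off $N$, $\epsilon>0$ on $N$, and $\prod_k(1-\epsilon(T^kx))>0$ on $N$. This kernel has an absolutely continuous component everywhere and is $\pi$-invariant and $\pi$-irreducible. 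Yet, started in $N$, it stays in the null set $N$ forever with positive probability.

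\textbf{What to add.} You need to show that the chain cannot remain on $L_\ell$ forever; here this follows from the structure of the sampler. Put $r(\ell;s)=1-\int q(\ell,\ell')\,a(\ell,\ell';s)\,d\ell'<1$ and $\bar r(\ell)=\int\mathrm{IG}(s\mid\ell)\,r(\ell;s)\,ds<1$. Let $h$ be bounded harmonic with $h=c$ Lebesgue-a.e. Then $h(s,\ell)-c=\int\mathrm{IG}(s'\mid\ell)\,r(\ell;s')\,\{h(s',\ell)-c\}\,ds'$. The right-hand side does not depend on $s$, so it equals some $g(\ell)$. Hence $g(\ell)=\bar r(\ell)\,g(\ell)$, so $g\equiv0$ and $h\equiv c$.

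Equivalently, $T$ consecutive rejections from $(\sigma_0^2,\ell_0)$ have probability $\bar r(\ell_0)^T\to0$. So $\ell$ is accepted in finite time almost surely, after which the state has a Lebesgue density. This is the condition the paper feeds into Theorem~12 of Roberts and Rosenthal (2006): $\sigma_f^2$ moves at every iteration and $\ell$ makes an accepted move in finite time. If you replace Tierney's Corollary~2 with that citation or with the computation above, your argument is complete.

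\textbf{On properness.} Your caution is justified. Properness is automatic with the nugget $\eta I_n$, since $|K_\ell+\eta I_n|\ge\eta^n$. Without the nugget it holds for Lebesgue-a.e.\ $\mathbf f_{1:n}$ by Fubini. It can fail for specific data, such as constant $\mathbf f_{1:n}$, when $p(\ell)$ has polynomial tails. Stating it as an assumption is therefore the right call.
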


\begin{figure}[ht]
    \centering
    \begin{subfigure}{0.4\textwidth}
        \centering
        \includegraphics[width=\linewidth]{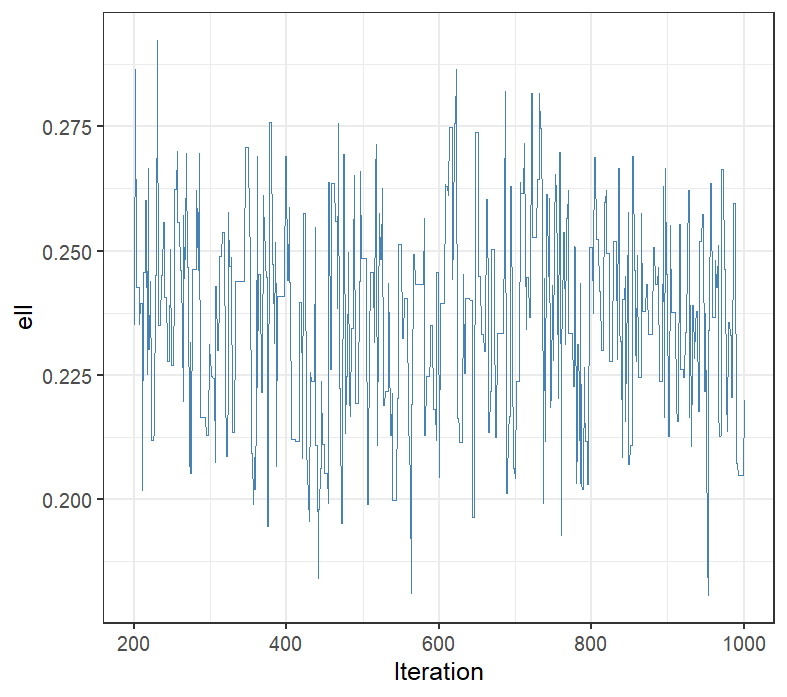}
    \end{subfigure}
    \begin{subfigure}{0.4\textwidth}
        \centering
        \includegraphics[width=\linewidth]{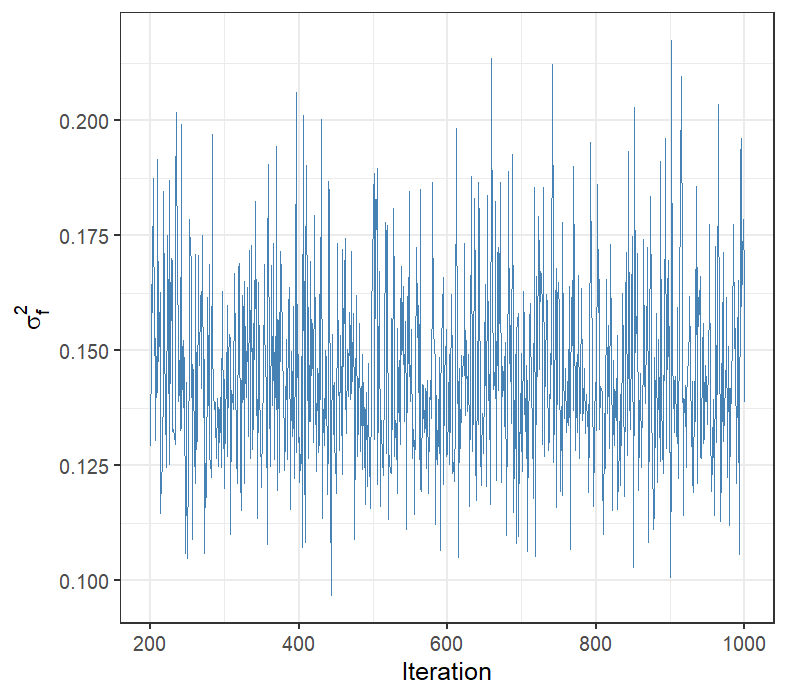}
    \end{subfigure}
    \caption{Trace plots of the Gaussian kernel hyperparameters obtained via a Metropolis-within-Gibbs sampler with $n=100$ points. A Markov chain of $T=1000$ iterations was generated, with the first $200$ samples discarded as burn-in, leaving $800$ retained samples shown. The left panel displays the trace of the length-scale parameter $\ell$, sampled using a random-walk Metropolis–Hastings step with proposal step size $s_\ell=0.2$ and acceptance rate of approximately $0.55$ and a log-Normal prior with zero mean and variance $100$. The right panel shows the trace of the signal variance $\sigma_f^2$, sampled directly from its conditional inverse-gamma distribution due to conjugacy with inverse-gamma parameters $\alpha_f=\beta_f=2$.}
\end{figure}

To ensure positivity of $\ell$ and improve numerical stability, the MH proposal is performed on the log-transformed variable $u_\ell = \log(\ell)$. At iteration $t$, let $\ell_{\rm curr} = \ell^{(t-1)}$ be the current value and define $u_{\ell, \rm curr} = \log(\ell_{\rm curr})$. A proposal is generated as
\[
u_{\ell, \rm prop} = u_{\ell, \rm curr} + \epsilon, \quad \epsilon \sim \mathcal{N}(0, s_\ell^2),
\]
and the corresponding proposed $\ell$ is $\ell_{\rm prop} = \exp(u_{\ell, \rm prop})$.  Since the proposal is made on the transformed scale, the posterior density must include a Jacobian adjustment for the change of variables. Specifically, if $p(\ell \mid \textbf{f}_{1:n}, \sigma_f^2)$ is the posterior density in the original scale, the density in the log scale is \(p(u_\ell \mid \textbf{f}_{1:n}, \sigma_f^2) = p(\ell \mid \textbf{f}_{1:n}, \sigma_f^2) \left| \frac{d\ell}{du_\ell} \right| = p(\ell \mid \textbf{f}_{1:n}, \sigma_f^2) \cdot e^{u_\ell}\). Taking logarithms, this results in the additional term $u_\ell$ in the log posterior. Therefore, the log posterior for the proposed value is computed as
\[
\log p(u_{\ell, \rm prop} \mid \textbf{f}_{1:n}, \sigma_f^{2(t-1)}) = -\frac{n}{2} \log \sigma_f^{2(t-1)} - \frac{1}{2\sigma_f^{2(t-1)}} \textbf{f}_{1:n}^\top K_{\ell_{\rm prop}}^{-1} \textbf{f}_{1:n} + \log p(\ell_{\rm prop}) + u_{\ell, \rm prop},
\]
and similarly for the current value. The MH acceptance probability is then
\[
\alpha = \min\Big\{1, \exp\big(\log p(u_{\ell, \rm prop} \mid \cdot) - \log p(u_{\ell, \rm curr} \mid \cdot)\big)\Big\}.
\]
The proposal $\ell_{\rm prop}$ is accepted with probability $\alpha$, otherwise the current value is retained. This procedure guarantees that the sampler targets the correct posterior while enforcing $\ell > 0$. After updating $\ell$, the full conditional of $\sigma_f^2$ is used to sample the next value. Given the updated $\ell^{(t)}$, the kernel matrix $K_\ell$ is computed, and $\sigma_f^{2(t)}$ is drawn from an inverse-gamma distribution with shape parameter $\alpha_f + n/2$ and scale parameter $\beta_f + \frac{1}{2} \textbf{f}_{1:n}^\top K_\ell^{-1} \textbf{f}_{1:n}$. In practice, a Cholesky decomposition of $K_\ell$ is used to efficiently compute both $K_\ell^{-1} \textbf{f}_{1:n}$ and the log-determinant required in the posterior computation. A small nugget $\eta I_n$, $\eta=10^{-8}$ is added to $K_\ell$ for numerical stability and this procedure is iterated for $T$ steps producing posterior samples $\{\ell^{(t)}, \sigma_f^{2(t)}\}_{t=1}^{T}$.

\subsection{Bayesian Quadrature Posterior Computation}

After obtaining posterior samples of the kernel hyperparameters $\{\ell^{(t)}, \sigma_f^{2(t)}\}_{t=1}^{T}$, we fix the RBF kernel parameters to their posterior averages
\[
\bar{\theta}_T = (\bar{\sigma}_{f,T}^2, \bar{\ell}_T) = \frac{1}{T-T_0} \sum_{t=T_0+1}^{T} \theta^{(t)},
\]
where $T_0$ denotes the burn-in period. Using these fixed hyperparameters, we draw a set of design points $\mathbf{x}_{1:n} = (x_1, \dots, x_n) \stackrel{\mathrm{i.i.d.}}{\sim} \mathcal{N}(0, \log n)$, and compute the conditional posterior of the integral $\Pi(f^\star) \mid \mathbf{x}_{1:n}$. Figure \ref{RBF} compares the performance of two sampling strategies: a standard Gaussian distribution (measure of the integral) and the distribution $Q_n$ of Theorem \ref{thm1}. In the left panel, the posterior mean obtained with the distribution $Q_n$ approaches the true integral value more rapidly than in the standard Gaussian case. The corresponding uncertainty bands also contract faster, indicating a more efficient reduction of uncertainty. This difference is even more pronounced in the right panel. The posterior variance decreases substantially faster when using $Q_n$. After $150$ evaluations, the variance is approximately $1.01 \times 10^{-11}$, whereas sampling from the standard Gaussian results in a final variance of $1.57 \times 10^{-6}$.
\begin{figure}[ht]
    \centering
    \begin{subfigure}{0.4\textwidth}
        \centering
        \includegraphics[width=\linewidth]{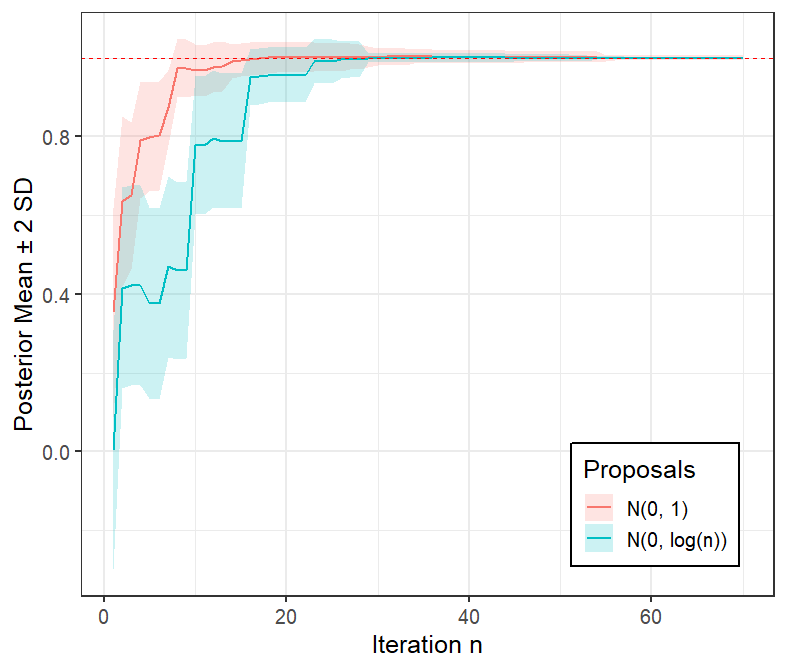}
    \end{subfigure}
    \begin{subfigure}{0.4\textwidth}
        \centering
        \includegraphics[width=\linewidth]{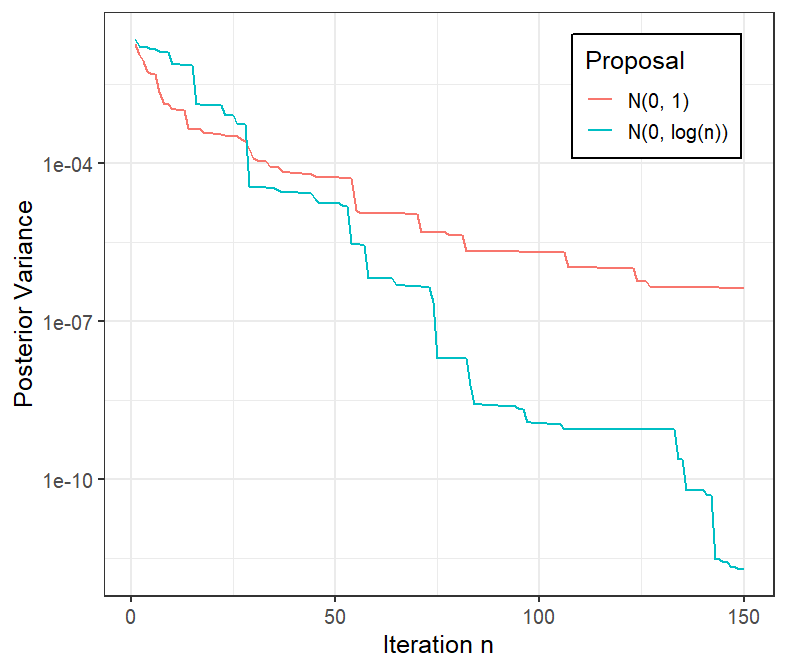}
    \end{subfigure}
    \caption{Left: posterior mean (solid lines) with $\pm 2$ standard deviation bands over $70$ sequential design iterations for two sampling strategies: standard Gaussian and $Q_n$; the true integral value is shown as a dashed red line. Right: posterior variance (log scale) versus the number of evaluations for both strategies.}
    \label{RBF}
\end{figure}

To propagate uncertainty from both the integrand and the random design, we repeat the experiment $N$ times. Let $\hat{\Pi}^{(i)}(f^\star)$ and $\hat{V}_n^{(i)}(f^\star)$ denote the posterior mean and variance of $\Pi(f^\star)$ in the $i$-th repetition. The overall posterior mean and variance are estimated via
\[
\bar{\Pi}_N(f^\star) = \frac{1}{N} \sum_{i=1}^N \hat{\Pi}^{(i)}(f^\star), \qquad
\bar{V}_N(f^\star) = \frac{1}{N} \sum_{i=1}^N \hat{V}_n^{(i)}(f^\star) + \frac{1}{N} \sum_{i=1}^N \big(\hat{\Pi}^{(i)}(f^\star) - \bar{\Pi}_N(f^\star)\big)^2,
\]
where the second term captures variability due to the randomness of the design points, in accordance with the law of total variance. More formally, let $p(\Pi(f^\star), \mathbf{x}_{1:n})$ be the joint distribution of the integral and the design points. For a fixed $\mathbf{x}_{1:n}$, the conditional posterior $p(\Pi(f^\star) \mid \mathbf{x}_{1:n})$ is Gaussian, but marginalizing over the random design yields a non-Gaussian mixture:
\[
p(\Pi(f^\star)) = \int p(\Pi(f^\star) \mid \mathbf{x}_{1:n}) \, p(\mathbf{x}_{1:n}) \, d\mathbf{x}_{1:n} \approx \frac{1}{R} \sum_{i=1}^R \mathcal{N}\big(\hat{\Pi}^{(i)}(f^\star), \hat{V}_n^{(i)}(f^\star)\big),
\]
where $R$ independent sets of design points are used, and each Gaussian component corresponds to the posterior conditional on one sampled design. Because the component means and variances vary, the mixture is generally non-Gaussian.

Credible intervals for $\Pi(f^\star)$ are obtained by Monte Carlo sampling from this mixture. Specifically, for each component $i$, we draw $S$ independent samples:
\[
Z_{i,s} = \hat{\Pi}^{(i)}(f^\star) + \sqrt{\hat{V}_n^{(i)}(f^\star)} \, \varepsilon_{i,s}, \qquad \varepsilon_{i,s} \sim \mathcal{N}(0,1), \quad s=1,\dots,S.
\]
The combined set of $R \cdot S$ samples $\{Z_{i,s}\}_{i=1,\dots,R; s=1,\dots,S}$ provides an empirical approximation of the marginal posterior. Empirical quantiles of these samples then yield credible intervals, giving a distribution-free summary that simultaneously accounts for uncertainty in the Gaussian process model and the randomness of the design points. Figure \ref{RBF_mean} shows results averaged over $N=1{,}000$ independent repetitions and $S=100$ independent samples for the same two sampling strategies. In the left panel, the posterior mean for $Q_n$ approaches the true integral value more quickly, with $95\%$ uncertainty bands contracting faster, while being larger at early iterations than in the Gaussian case. In the right panel, the total posterior variance decreases smoothly for both strategies, but $Q_n$ achieves a much lower variance more rapidly. At the final iteration, the variance is approximately $5.87 \times 10^{-9}$ for $Q_n$, compared to $4.26 \times 10^{-6}$ for the standard Gaussian, highlighting the efficiency gain from using $Q_n$.
\begin{figure}[ht]
    \centering
    \begin{subfigure}{0.4\textwidth}
        \centering
        \includegraphics[width=\linewidth]{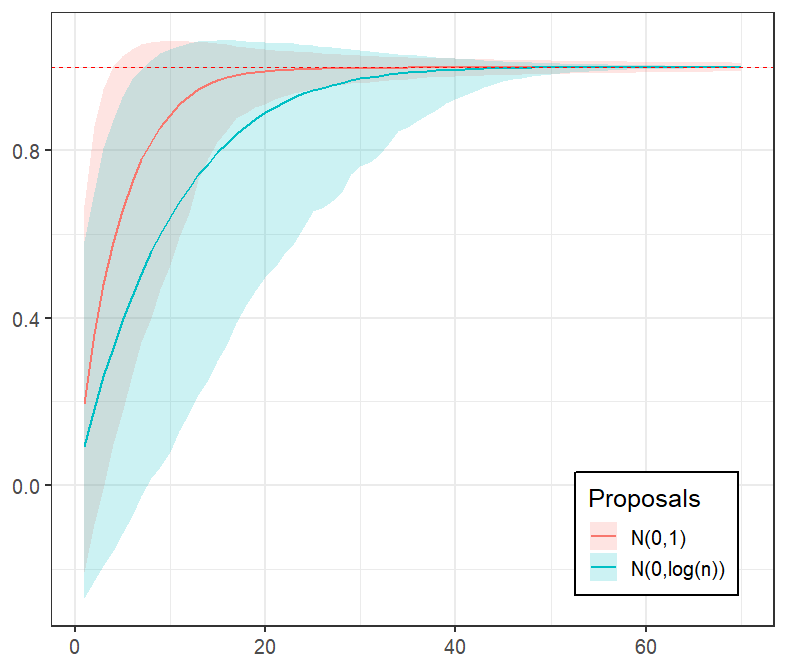}
    \end{subfigure}
    \begin{subfigure}{0.4\textwidth}
        \centering
        \includegraphics[width=\linewidth]{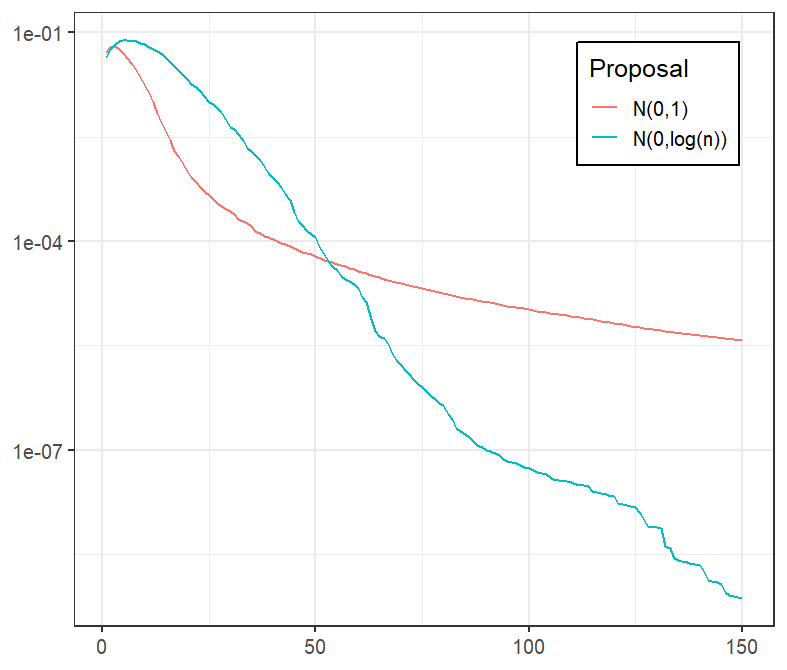}
    \end{subfigure}
    \caption{Left: average posterior mean (solid lines) with $95\%$ uncertainty bands over sequential design iterations, computed from $1{,}000$ independent repetitions, for two sampling strategies: standard Gaussian and $Q_n$; the true integral value is shown as a dashed red line. Right: total posterior variance versus the number of evaluations for both strategies.}
    \label{RBF_mean}
\end{figure}

It is important to note that the Gaussian kernel induces a reproducing kernel Hilbert space whose norm is equivalent to that of a Sobolev space of arbitrary integer order $\alpha \in \mathbb{N}$. As a consequence, the variance decay reduction of Corollary \ref{cor_postvar} in this example holds for any $\alpha$, and thus the associated posterior variance decays faster than any polynomial. While this leads to excellent empirical performance, it may not always be suitable because it assumes the GP to have mean square derivatives of all orders, and is thus very smooth. To obtain a more flexible example, we therefore consider an alternative model based on a Matérn kernel, which induces a function space with finite smoothness.

\subsection{Matérn Kernel}

The Matérn kernel with smoothness parameter $\alpha > 0$ is defined as
\[
k_\alpha(x,x') = \sigma_f^2 \, \frac{2^{1-\alpha}}{\Gamma(\alpha)} \left( \frac{\sqrt{2\alpha}}{\ell} |x - x'| \right)^{\alpha} 
K_\alpha\!\left( \frac{\sqrt{2\alpha}}{\ell} |x - x'| \right),
\]
where $\ell > 0$ is the lengthscale, $\sigma_f^2 > 0$ is the signal variance, $\Gamma(\cdot)$ is the Gamma function, and $K_\alpha(\cdot)$ denotes the modified Bessel function of the second kind \citep{rasmussen2006gaussian}. The parameter $\alpha$ controls the smoothness of sample paths, with larger values corresponding to smoother functions. In this work, we focus on the case $\alpha = 3/2$, for which the kernel simplifies to the explicit form
\[
k_{3/2}(x,x') = \sigma_f^2 \left(1 + \frac{\sqrt{3}}{\ell} |x - x'|\right)\exp\!\left(-\frac{\sqrt{3}}{\ell} |x - x'|\right).
\]

As in the Gaussian case, Bayesian quadrature requires the computation of the kernel mean $m_{\Pi,\alpha}(x) = \int_{\mathbb{R}} k_\alpha(x,x') \, d\Pi(x')$, where $\Pi = \mathcal{N}(0,1)$. For general $\alpha$, this quantity does not admit a simple closed form. However, for $\alpha = 3/2$ (and in general if $\alpha=p+1/2$, $p\in\mathbb{N}$), it can be computed analytically as
\[
m_{\Pi,3/2}(x) 
= \sigma_f^2 \frac{e^{-x^2/2}}{\sqrt{2\pi}} \left[ T_\ell\!\left(\tfrac{\sqrt{3}}{\ell} - x\right) + T_\ell\!\left(\tfrac{\sqrt{3}}{\ell} + x\right) \right],
\]
where $T_\ell(\beta) = I_0(\beta) + \frac{\sqrt{3}}{\ell} I_1(\beta)$, with $I_0(\beta) = \sqrt{\frac{\pi}{2}} \exp\!\left(\frac{\beta^2}{2}\right) \operatorname{erfc}\!\left(\frac{\beta}{\sqrt{2}}\right)$, $I_1(\beta) = 1 - \beta I_0(\beta)$ and $\operatorname{erfc}(\cdot)$ denotes the complementary error function. Similarly, the prior variance of the integral is given by $\mathrm{Var}_\alpha[\Pi(f)] = \int_{\mathbb{R}} \int_{\mathbb{R}} k_\alpha(x,x') \, d\Pi(x)\, d\Pi(x')$ which, for $\alpha = 3/2$, admits again the closed form expression
\[
\mathrm{Var}_{3/2}[\Pi(f)] 
= \sigma_f^2 \frac{1}{\sqrt{\pi}} \left[ I_0\!\left(\tfrac{\sqrt{3}}{\ell}\right) + \frac{\sqrt{3}}{\ell} I_1\!\left(\tfrac{\sqrt{3}}{\ell}\right) \right],
\]
where $I_0\!\left(\tfrac{\sqrt{3}}{\ell}\right) 
= \sqrt{\pi}\exp\!\left(\frac{3}{\ell^2}\right)\operatorname{erfc}\!\left(\tfrac{\sqrt{3}}{\ell}\right)$ and $I_1\!\left(\tfrac{\sqrt{3}}{\ell}\right) 
= 2 - 2\frac{\sqrt{3}}{\ell} I_0\!\left(\tfrac{\sqrt{3}}{\ell}\right)$. These expressions enable efficient and exact computation of the Bayesian quadrature posterior in the case $\alpha = 3/2$. 

\begin{figure}[ht]
    \centering
    \begin{subfigure}{0.4\textwidth}
        \centering
        \includegraphics[width=\linewidth]{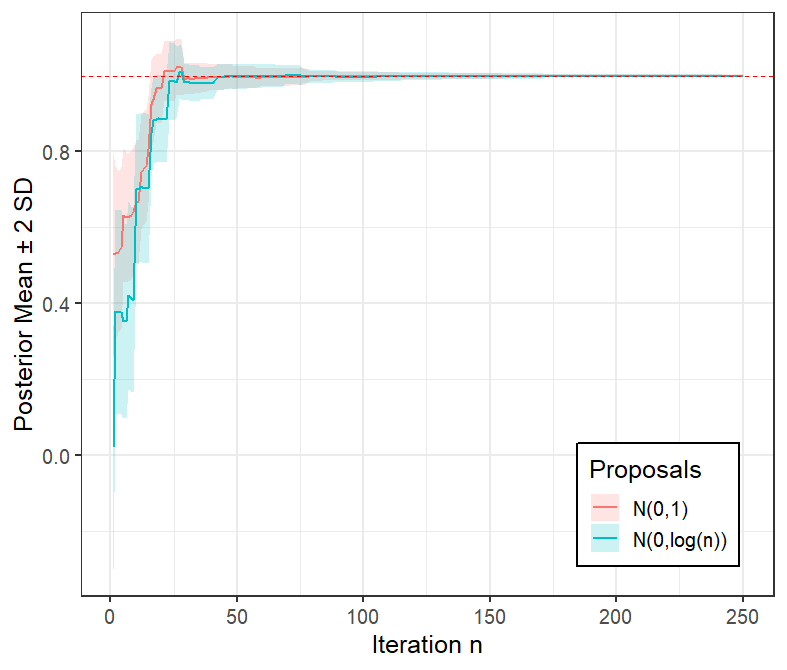}
    \end{subfigure}
    \begin{subfigure}{0.4\textwidth}
        \centering
        \includegraphics[width=\linewidth]{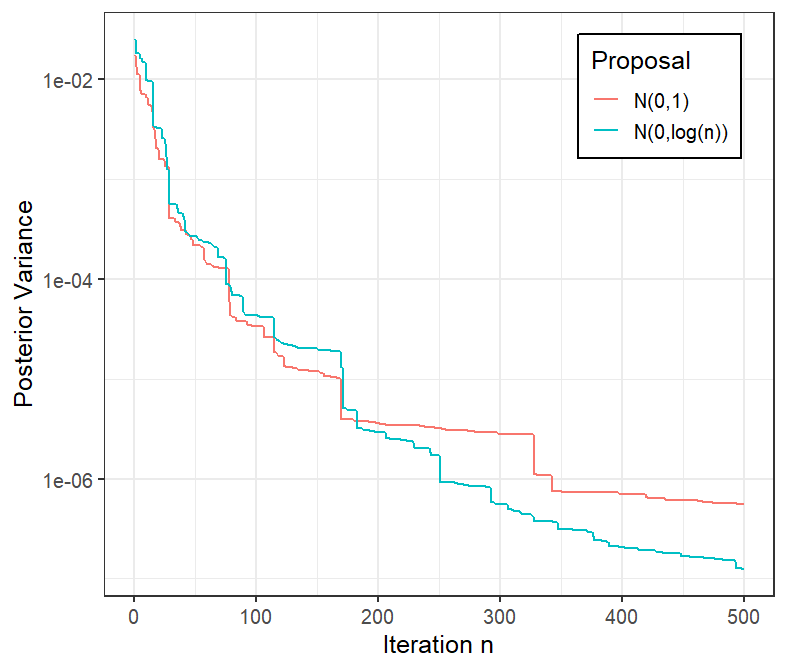}
    \end{subfigure}
    \caption{Left: posterior mean (solid lines) with $\pm 2$ standard deviation bands over $250$ sequential design iterations for two sampling strategies: standard Gaussian and $Q_n$; the true integral value is shown as a dashed red line. Right: posterior variance (log scale) versus the number of evaluations for both strategies.}
    \label{Matern}
\end{figure}

As in the previous experiments, the kernel hyperparameters are estimated in the same manner, but now using a Matérn kernel with smoothness parameter $\alpha = 3/2$ instead of the RBF kernel. This yields posterior average hyperparameters $\bar{\theta}_T = (\bar{\sigma}_{f,T}^2, \bar{\ell}_T) \approx (0.40, 0.21)$. In contrast to the RBF case, we consider a larger number of evaluations, setting $n = 500$ rather than $150$. This choice is motivated by the slower decay of uncertainty induced by the Matérn kernel, which necessitates a longer evaluation horizon in order to clearly observe the differences between the two sampling strategies. Figure \ref{Matern} compares the standard Gaussian sampling strategy with $Q_n$, in the same manner as Figure \ref{RBF}. The qualitative behavior is consistent with the RBF case: sampling from $Q_n$ leads to a faster reduction in posterior uncertainty. After $500$ evaluations, the posterior variance is approximately $1.24 \times 10^{-7}$ when using $Q_n$, compared to $5.62 \times 10^{-7}$ under standard Gaussian sampling. Figure \ref{Matern_mean} presents results averaged over repeated experiments, following the same procedure as in Figure \ref{RBF_mean}. The same trends are observed: the sampling strategy based on $Q_n$ provides a more representative quantification of uncertainty, with larger variance at early iterations and a more rapid contraction thereafter. At the final iteration, the total posterior variance is approximately $2.63 \times 10^{-7}$ for $Q_n$, compared to $1.47 \times 10^{-6}$ for the standard Gaussian strategy.

\begin{figure}[ht]
    \centering
    \begin{subfigure}{0.4\textwidth}
        \centering
        \includegraphics[width=\linewidth]{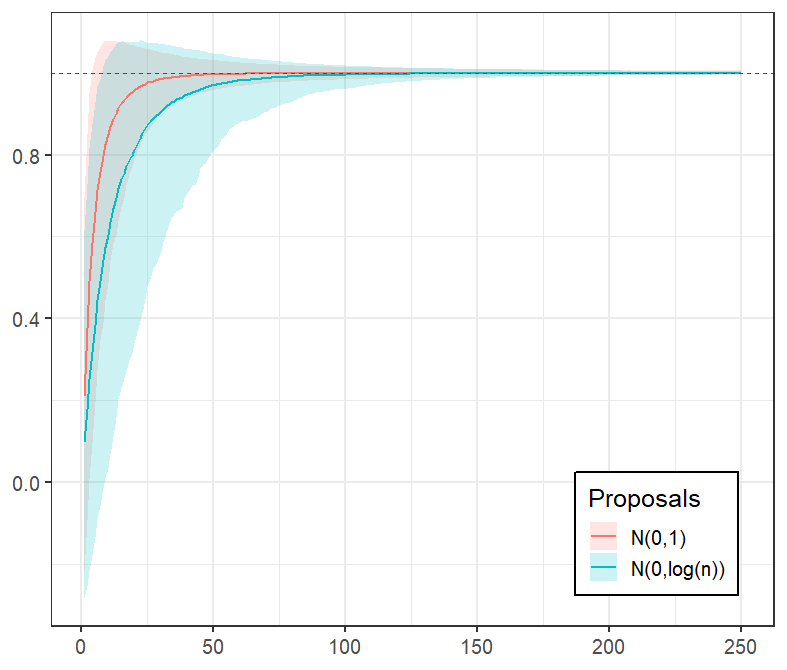}
    \end{subfigure}
    \begin{subfigure}{0.4\textwidth}
        \centering
        \includegraphics[width=\linewidth]{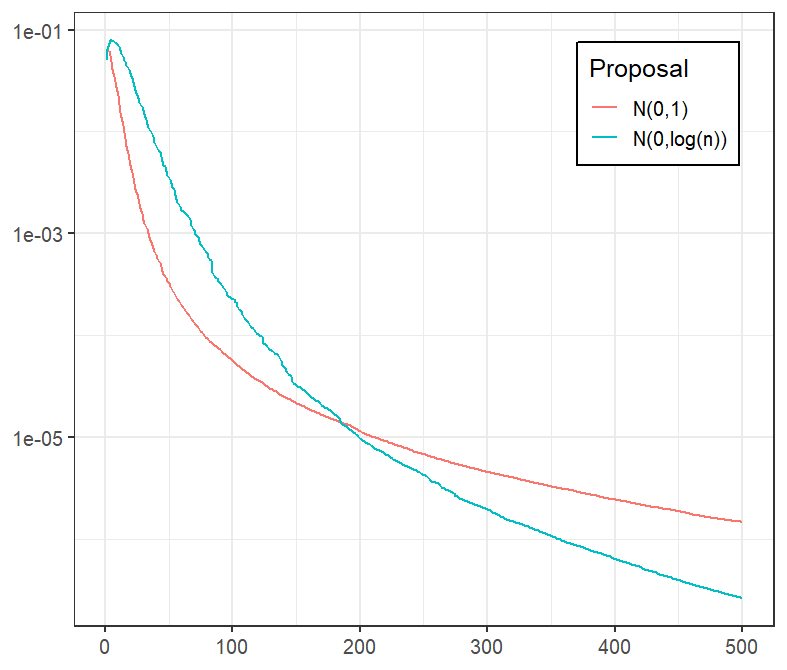}
    \end{subfigure}
    \caption{Left: average posterior mean (solid lines) with $95\%$ uncertainty bands over sequential design iterations, computed from $1{,}000$ independent repetitions, for two sampling strategies: standard Gaussian and $Q_n$; the true integral value is shown as a dashed red line. Right: total posterior variance versus the number of evaluations for both strategies.}
    \label{Matern_mean}
\end{figure}

\subsection{Matérn Kernel and Student-\texorpdfstring{$t$}{t} Measure}

We now consider a setting where the target measure is no longer Gaussian but instead a Student-$t$ distribution with $\nu = {5}$ degrees of freedom, denote by $\Pi$ and $t_{5}$ respectively its measure and density. This choice introduces heavier tails relative to the Gaussian case and serves to illustrate the applicability of the proposed framework beyond Gaussian integration measures. Let $f(x) = 1 + \sin(2\pi x)$, then $\Pi(f)=1$. Now define $g(x) = f(x)\,\frac{t_5(x)}{t_{4.49}(x)}$. Then, by a change of measure,
\[
\int f(x)\,t_5(x)\,dx = \int g(x)\,t_{4.49}(x)\,dx,
\]
where we obtains that $g \in \mathcal{H}^\alpha(\mathbb{R}^d)$ for all $\alpha \ge 1$. The Gaussian process prior is equipped again with a Matérn kernel with smoothness parameter $\alpha = 3/2$. Unlike the case with Gaussian measure considered in the previous experiment, the kernel mean embedding \(m_{\Pi,3/2}(x)\) and the prior variance of the integral $\mathrm{Var}_{3/2}[\Pi(f)]$ do not admit closed-form expressions when $\Pi = t_{4.49}$. The prior variance is therefore approximated via Monte Carlo by drawing $10^8$ independent samples from the $t_{4.49}$ distribution and averaging the corresponding kernel evaluations. In contrast, the kernel mean is computed deterministically using R’s \texttt{integrate()} function, which evaluates the associated one-dimensional integral of the Matérn $3/2$ kernel against the $t_8$ density to high relative tolerance. The kernel hyperparameters are estimated via the same MCMC procedure as in the preceding 
experiments.

\begin{figure}[ht]
    \centering
    \begin{subfigure}{0.4\textwidth}
        \centering
        \includegraphics[width=\linewidth]{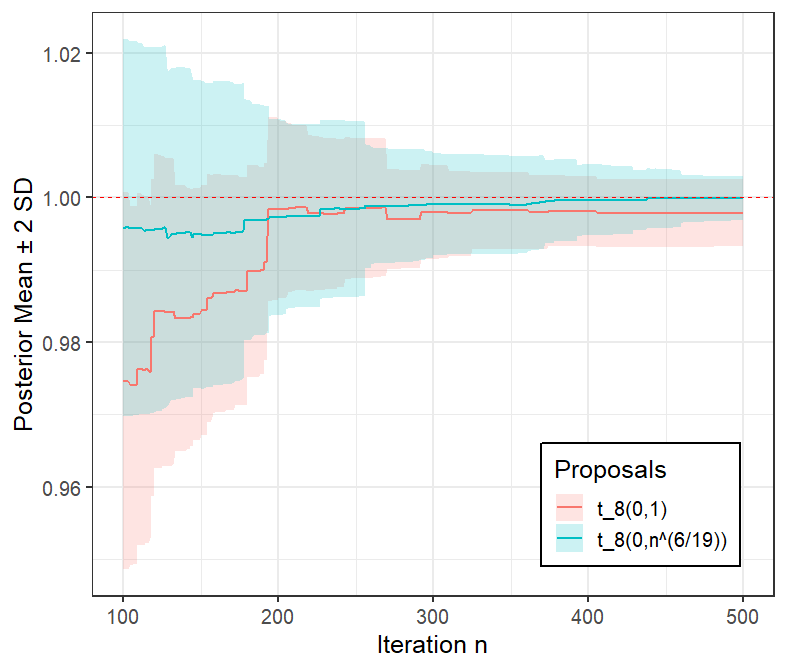}
    \end{subfigure}
    \begin{subfigure}{0.4\textwidth}
        \centering
        \includegraphics[width=\linewidth]{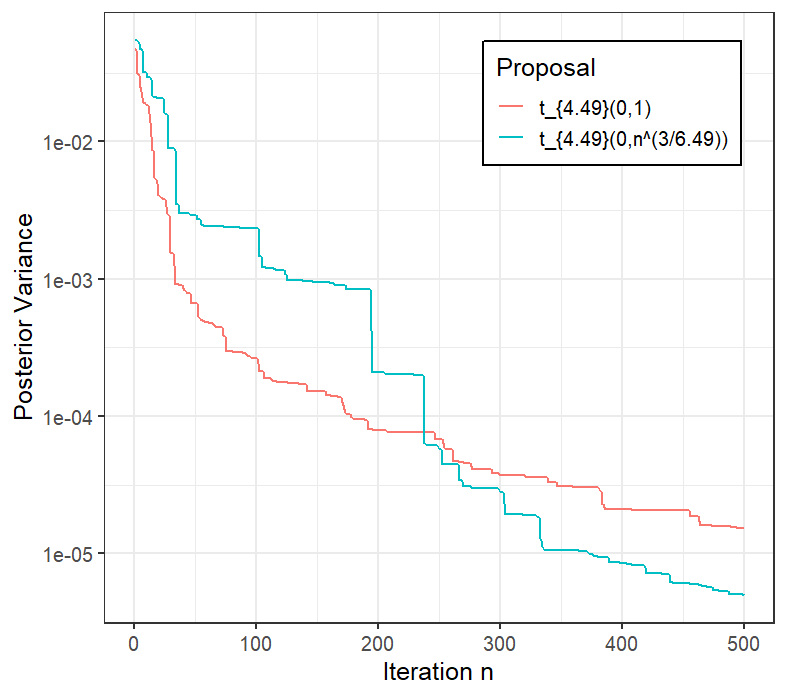}
    \end{subfigure}
    \caption{Left: posterior mean (solid lines) with $\pm 2$ standard deviation bands over the last $400$ sequential design iterations for two sampling strategies: standard $t_{4.49}$ and inflated $t_{4.49}$; the true integral value is shown as a dashed red line. Right: posterior variance (log scale) versus the number of evaluations for both strategies.}
    \label{Matern_t}
\end{figure}

The two sampling strategies under comparison are now: (i) the standard Student-$t$ distribution $t_{4.49}$, which coincides with the integration measure; and (ii) an inflated Student-$t$ distribution $t_{4.49}$ scaled by $n^{3/6.49}$, which plays the role of $Q_n$ from Theorem~\ref{thm1} adapted to the present setting. The experiment is otherwise conducted identically to the previous subsections, with $n = 500$ sequential design evaluations.

Figure~\ref{Matern_t} presents results from a single run. The posterior mean obtained under the inflated $t_{4.49}$ strategy converges faster toward the true integral value of $1$, and same for the posterior variance than under the standard $t_{4.49}$ strategy. However, the separation between the two strategies is considerably less pronounced than in the RBF or Matérn $3/2$ experiments with Gaussian measure.
\begin{figure}[ht]
    \centering
    \begin{subfigure}{0.4\textwidth}
        \centering
        \includegraphics[width=\linewidth]{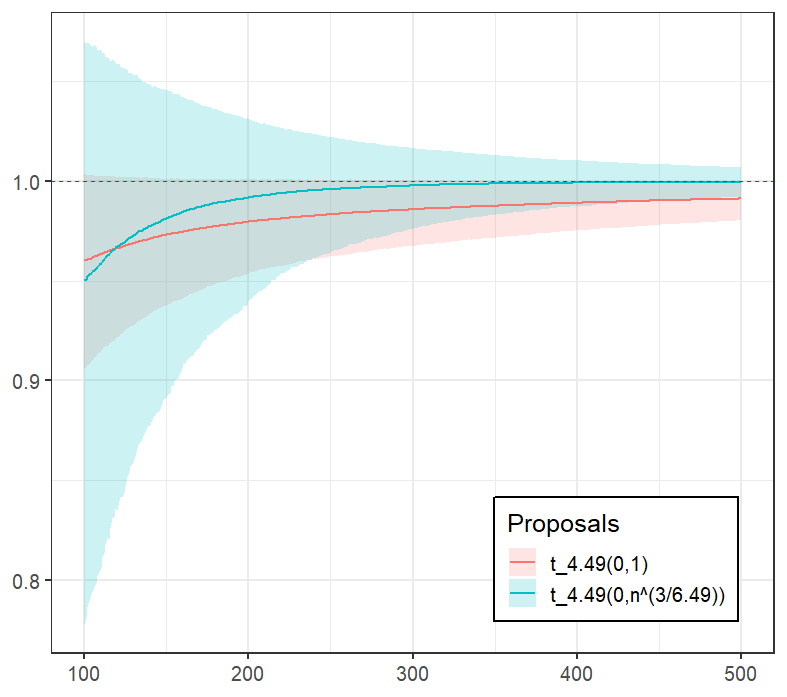}
    \end{subfigure}
    \begin{subfigure}{0.4\textwidth}
        \centering
        \includegraphics[width=\linewidth]{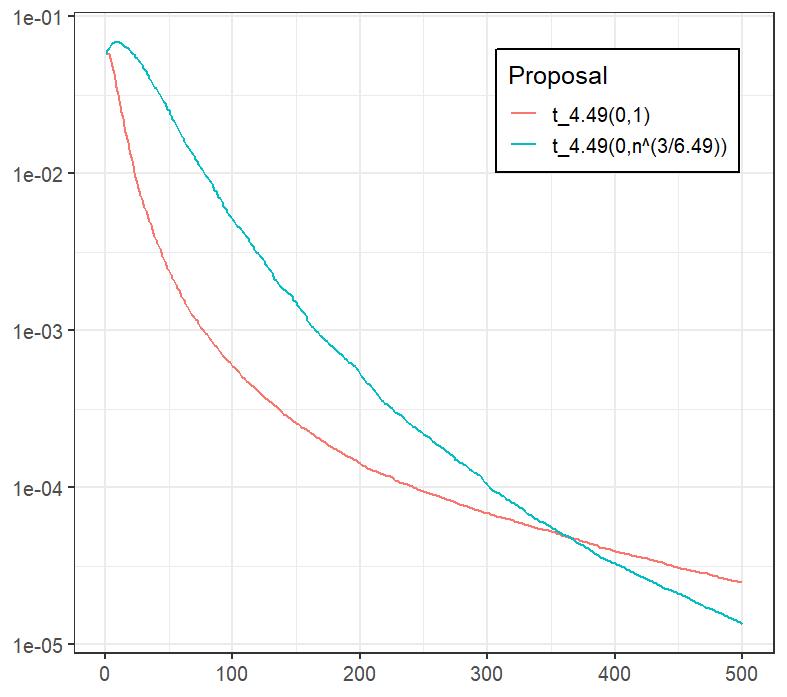}
    \end{subfigure}
    \caption{Left: average posterior mean (solid lines) with $95\%$ uncertainty bands over sequential design iterations, computed from $1{,}000$ independent repetitions, for two sampling strategies: standard $t_{4.49}$ and inflated $t_{4.49}$; the true integral value is shown as a dashed red line. Right: total posterior variance versus the number of evaluations for both strategies.}
    \label{Matern_t_mean}
\end{figure}
Figure~\ref{Matern_t_mean} presents the results averaged over $N = 1{,}000$ independent repetitions, following the same procedure as in Figures~\ref{RBF_mean} and~\ref{Matern_mean}. The averaged results confirm the single-run findings: the inflated $t_{4.49}$ strategy achieves a lower total posterior variance at the final iteration, and posterior mean is much closer to the true integral solution, resulting also in a slightly biased estimator when using $t_{4.49}$. At $n = 500$ evaluations, the total posterior variance and the posterior mean are approximately $1.35\times10^{-5}$ and $0.9996$ for the inflated $t_{4.49}$ strategy, compared to $2.48\times10^{-5}$ and $0.9911$ for the standard $t_{4.49}$ strategy.

\section{Conclusion}
\label{sec_conclusion}
 
This paper studied randomized kernel quadrature on unbounded domains, with a focus on constructing explicit sampling distributions that achieve optimal convergence rates for worst-case integration error over Sobolev smoothness classes.
 
The central obstacle is that standard approaches to minimax-optimal quadrature on compact domains break down on $\mathbb{R}^d$: a change-of-variables reduction to $[0,1]^d$ destroys the smoothness of the integrand, as illustrated in Section~\ref{sec-chal}, and naive sampling from the target measure leaves the tails of the integration measure systematically underexplored. Our solution is to work directly on $\mathbb{R}^d$ and to inflate the sampling distribution by an $n$-dependent factor, chosen to balance two competing sources of error: the tail truncation bias, which penalises undersampling in the tails, and the fill-distance approximation error, which penalises insufficient coverage of the bulk.
 
For a Gaussian target $\Pi = \mathcal{N}(0, I_d)$, Theorem~\ref{thm1} shows that sampling from $\mathcal{N}(0, (\alpha \log n) I_d)$ achieves the rate $n^{-\alpha/d + \varepsilon}$, matching the minimax-optimal rate, up to an $\varepsilon$ factor, for integration in Sobolev spaces of smoothness $\alpha$ in dimension $d$ established by \cite{novak1988deterministic} and recovered by \cite{briol2019probabilistic} in the compact setting. For a Student-$t$ target $\Pi = t_\nu(0, I_d)$, Theorem~\ref{thm1} shows that sampling from $t_\nu(0, n^{2\alpha/(d(\alpha+\nu+d/2))} I_d)$ achieves the rate $n^{-\alpha(\nu+d/2)/(d(\alpha+\nu+d/2)) + \varepsilon}$, and in Theorem \ref{thm:lower} we proved that it is the minimax. This rate reflects a genuine statistical difficulty introduced by heavy tails: it is strictly slower than the Gaussian rate, and recovers $n^{-\alpha/d}$ only in the limit $\nu \to \infty$. Both results are kernel-agnostic: the optimal sampling distribution depends only on the target measure and not on the specific kernel, while in the Student-$t$ case it depends only on the Sobolev equivalence class, providing robustness to kernel misspecification. Corollary~\ref{cor1} shows that both rates are attained by a sequential sampling scheme, so the final sample size need not be specified in advance to mantain the rate. Proposition~\ref{prop:variance_control} and Corollary~\ref{cor_postvar} extend the convergence guarantees to the total posterior variance, accounting simultaneously for uncertainty in the Gaussian process model and the randomness of the design points.
 
Several directions for future work follow naturally from the present results. A first question is whether adaptivity to the smoothness parameter $\alpha$ is possible in the Student-$t$ case: unlike the Gaussian case, where $\alpha$ enters the sampling variance only as a multiplicative constant and can be absorbed into the asymptotic rate, the Student-$t$ proposal depends explicitly on $\alpha$ through the exponent of $\Sigma_n$. A natural direction is to further investigate the role of the proposal family: in this work, the sampling distributions achieving asymptotically optimal rates are chosen within the same parametric class as the measure of the integral (Gaussian or Student-$t$), with the main degree of freedom being an $n$-dependent inflation of the variance. While this restriction is sufficient to recover minimax rates, it may be overly conservative, and more flexible or structurally different proposal families could potentially yield improved practical performance despite having the same asymptotic guarantees. Finally, on the practical side, the sampling distributions constructed here are explicitly $n$-dependent, which raises the question of how to implement them efficiently in sequential settings where evaluations arrive one at a time: Corollary~\ref{cor1} provides a theoretical answer, but an adaptive implementation which exploits also the information of the integrand may be faster in practice, even though, as outlined in \cite{bach2017}, \cite[Page 17]{novak1988deterministic} shows that adaptive quadrature rules where points are selected sequentially with the knowledge of the function values at previous points cannot improve the worst-case guarantees. Another natural direction is to investigate how the results change when function evaluations are no longer exact, but are corrupted by noise or subject to numerical error. In this setting, both the construction of the sampling distribution and the resulting convergence rates may be affected.

The analysis presented in this work can be extended beyond the Gaussian and Student-$t$ settings. In particular, distributions with radially decaying tails can be treated using similar arguments. In these cases, Euclidean balls maximize the measure among sets of fixed volume, which simplifies the analysis by reducing it to radially symmetric regions. Extending these results to more general measures would likely require characterizing the sets that are extremal in this sense. Moreover, the results can be extended to Gaussian kernels, or more generally to kernels inducing spaces of infinite smoothness, in which case exponentially decaying worst-case error rates can be achieved. However, the results of Proposition~\ref{prop:variance_control} and Corollary~\ref{cor_postvar} do not directly carry over to this setting, as they would require the fill distance bound of Lemma~\ref{lemma_fill.dist.general} to hold with exponentially high probability, whereas the current analysis establishes this result only with polynomially high probability. Other extensions can be developed for alternative surrogate models, such as Student-$t$ processes \cite{shah2014student,Pruher2017StudentT}, suggesting that the overall framework is not restricted to Gaussian process priors. Finally, although the present analysis can be extended to functions without mass decay (provided the integral is well defined) via a suitable change of measure, it may be possible to generalize the results further without such a transformation. In particular, one could aim to cover classes of functions that are locally Sobolev and bounded in $L^\infty$. Appendix~\ref{AppendixB} provides preliminary numerical evidence in this direction, together with a more detailed discussion.

\phantomsection\label{supplementary-material}
\bigskip

\bibliographystyle{abbrvnat}
\bibliography{bibliography.bib}

\appendix

\section{Proofs}
\label{appendixA}

We first prove a concentration result. The following lemma shows that the fraction of samples lying within one Mahalanobis unit of the mean is close to its population value. In particular, with exponentially high probability, a constant fraction of the sample points lies in this central region.

\begin{lemma}
\label{lemm_conc}
Let \(X_1,\dots,X_n\) be i.i.d.\ random vectors in \(\mathbb{R}^d\), and let \(\Sigma\) definite positive. Define the Mahalanobis norm $\|x\|_{\Sigma^{-1}}^2 := x^\top \Sigma^{-1} x$, and $p := \mathbb{P}\!\left(\|X\|_{\Sigma^{-1}} \le 1\right)$. Then,
\[
\mathbb{P}\!\left(
\#\{ i : \|X_i\|_{\Sigma^{-1}} \le 1 \}
\ge \frac{p}{2} n
\right)
\ge
1 - \exp\!\left(-\frac{p}{8} n\right).
\]
\end{lemma}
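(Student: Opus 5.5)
The count $N := \#\{ i : \|X_i\|_{\Sigma^{-1}} \le 1 \}$ is a sum of $n$ i.i.d.\ Bernoulli$(p)$ indicators $B_i := \mathds{1}\{\|X_i\|_{\Sigma^{-1}} \le 1\}$. The plan is to apply a multiplicative Chernoff lower-tail bound to $N$ with deviation parameter $1/2$. By independence, $\mathbb{E}[N] = np$, and the event $\{N < pn/2\}$ is exactly the event $\{N < (1-\eta)\mathbb{E}[N]\}$ with $\eta = 1/2$.

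The standard multiplicative Chernoff bound gives
\[
\mathbb{P}\bigl(N \le (1-\eta)np\bigr) \le \exp\!\left(-\frac{\eta^2 np}{2}\right).
\]
With $\eta = 1/2$ this yields $\exp(-np/8)$, and taking complements gives the claim.

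For self-containedness, I would derive the bound via the moment generating function. For $\lambda>0$, Markov's inequality applied to $e^{-\lambda N}$ gives
\[
\mathbb{P}(N \le pn/2) \le e^{\lambda pn/2}\,\bigl(1-p+pe^{-\lambda}\bigr)^n \le \exp\!\bigl(np(\lambda/2 + e^{-\lambda}-1)\bigr),
\]
where the second step uses $1+x\le e^x$. Choosing $\lambda=\log 2$ makes the exponent $np\bigl(\tfrac{\log 2}{2} - \tfrac12\bigr) \approx -0.153\,np$. This is at most $-np/8 = -0.125\,np$, which gives the stated bound.

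There is no real obstacle. The only points to check are:
\begin{itemize}
\item the numerical inequality $\tfrac{1-\log 2}{2} \ge \tfrac18$, which holds since $\log 2 < 3/4$;
\item the trivial case $p=0$, where the bound reads $\mathbb{P}(N\ge 0)\ge 0$ and holds automatically.
\end{itemize}
No property of the distribution of $X$ beyond the i.i.d.\ assumption is needed, and in particular the positive-definiteness of $\Sigma$ only ensures that the Mahalanobis norm is well defined.
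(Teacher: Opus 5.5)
Your proposal is correct and follows the paper's proof. The paper likewise writes the count as a $\mathrm{Binomial}(n,p)$ variable, applies the multiplicative Chernoff lower-tail bound with deviation parameter $1/2$, and takes complements. Your extra MGF derivation with $\lambda=\log 2$ is a valid self-contained justification of that Chernoff step, and it even gives the slightly better exponent $\tfrac{1-\log 2}{2}\,np \ge np/8$.
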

\begin{proof}
Define $Y_i := \mathbf{1}\{\|X_i\|_{\Sigma^{-1}} \le 1\}$, $i=1,\dots,n,$ and let $S_n := \sum_{i=1}^n Y_i$. Since the \(X_i\) are i.i.d., the variables \(Y_i\) are i.i.d.\ Bernoulli with parameter $\mathbb{P}(Y_i = 1) = p$. Hence, $S_n \sim \mathrm{Binomial}(n,p)$, with $\mathbb{E}[S_n] = pn$. By the multiplicative Chernoff bound \citep{boucheron2013concentration}, for any \(\delta \in (0,1)\),
\[
\mathbb{P}(S_n \le (1-\delta)pn)
\le
\exp\!\left(-\frac{\delta^2}{2} pn\right).
\]
Taking \(\delta = \tfrac{1}{2}\), we obtain $\mathbb{P}\!\left(S_n \le \frac{p}{2} n\right)
\le
\exp\!\left(-\frac{p}{8} n\right)$. Taking complements yields the claim.
\end{proof}

We now establish how our sampling strategy controls the fill distance on the "effective domain" $B_{R_n}$.  Intuitively, since the density $q_n$ is radially decreasing, points are more likely near the center, and the lemma below quantifies the resulting coverage of the domain.

\begin{lemma}
\label{lemma_fill.dist.general}
Let $d \ge 1$, $R_n>0$ depending on $n$ and $B_{R_n} \subset \mathbb{R}^d$ be the ball centered in $0$ with radius $R_n$. If $\mathbf{x}_{1:n}=(x_1,\dots,x_n)$ are i.i.d. sampled from a distribution with an $L$-Lipschitz, with $0$ mean and radially decreasing probability density function $q_n$ in $B_{R_n}$, such that $a_n=nq_n(R_ne)=\mathcal{O}(n^p)$ for some $p>0$ and any unit vector $e\in\mathbb{R}^d$. Then, the fill distance on $B_{R_n}$ decays with high probability at the following rate
\[
h_{\mathbf{x}_{1:n},B_{R_n}} = O_P\left(\left(\frac{\log n}{a_n}\right)^{1/d}\right)= O_P\left(n^{-p/d+\varepsilon}\right).
\]
\end{lemma}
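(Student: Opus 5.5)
We will use the standard covering argument for fill distances, adapted to a ball whose radius grows with $n$ and to a sampling density that is only bounded below, not uniform. Set the target scale
\[
h_n := \Big(\frac{C\log n}{a_n}\Big)^{1/d},
\]
with $C>0$ to be chosen later. We cover $B_{R_n}$ by a family of small balls (or cubes) $\{B(z_j,h_n/2)\}_{j=1}^{N_n}$ with centres $z_j\in B_{R_n}$. Since the volume ratio gives $N_n \lesssim (R_n/h_n)^d + 1$, the count is at most polynomial in $n$, provided $R_n$ grows at most polynomially. This holds in both applications: $R_n\asymp\sqrt{\log n}$ in the Gaussian case and $R_n$ a power of $n$ in the Student case. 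If every small ball contains at least one sample point, then every $x\in B_{R_n}$ is within distance $h_n$ of some $x_i$. Hence $h_{\mathbf{x}_{1:n},B_{R_n}}\le h_n$ on that event.

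Next we lower-bound the $q_n$-mass of each small ball. Because $q_n$ is radially decreasing, its minimum over $B_{R_n}$ is attained on the boundary sphere, so $q_n(x)\ge q_n(R_n e)=a_n/n$ for every $x\in B_{R_n}$. A small ball $B(z_j,h_n/2)$ with $z_j\in B_{R_n}$ may stick out of $B_{R_n}$. To handle this, we either use only the part $B(z_j,h_n/2)\cap B_{R_n}$, which still has volume $\ge c_d h_n^d$ by convexity of $B_{R_n}$ and the interior cone property of balls. Alternatively, we use the $L$-Lipschitz assumption: on the overhang, $q_n\ge a_n/n - L h_n$. The Lipschitz route needs $Lh_n\ll a_n/n$, which may fail, so the intersection/cone argument is the safer choice. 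Either way we obtain
\[
p_j:=Q_n\big(B(z_j,h_n/2)\big)\ge c_d\, h_n^d\, \frac{a_n}{n}=c_d C\frac{\log n}{n}.
\]

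Then we take a union bound over the covering. Independence of the samples gives
\[
\mathbb P\big(\exists j:\ B(z_j,h_n/2)\cap\{x_i\}=\emptyset\big)\le N_n(1-\min_j p_j)^n\le N_n\, n^{-c_dC}.
\]
Since $N_n\lesssim (R_n/h_n)^d=\mathcal O(n^{q})$ for some $q$ depending on the growth of $R_n$ and on $a_n=\mathcal O(n^p)$, choosing $C$ large enough makes the right-hand side at most $n^{-c}$. This is exactly the polynomially-high-probability $\mathcal O_P$ statement. Finally, $(\log n/a_n)^{1/d}=n^{-p/d}(\log n)^{1/d}\le n^{-p/d+\varepsilon}$. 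This uses $a_n\asymp n^p$; the hypothesis $a_n=\mathcal O(n^p)$ should be read as this two-sided growth, i.e.\ $a_n\gtrsim n^{p}$.

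The main obstacle is the boundary behaviour of the small balls, together with making sure the covering number stays polynomial. The covering number depends on $R_n$, which the statement leaves unrestricted. We will therefore make explicit the mild assumption $\log R_n=\mathcal O(\log n)$ and verify it in both cases of Theorem~\ref{thm1}. The boundary effect will be handled via the uniform lower bound on $|B(z,h)\cap B_{R_n}|/h^d$, valid for $h\le R_n$.
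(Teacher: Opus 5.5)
Your proof is correct and follows essentially the same route as the paper: an $h/2$-net of $B_{R_n}$, empty-ball probabilities bounded through the boundary value $q_n(R_ne)=a_n/n$ of the radially decreasing density, and a union bound with $h^d\propto\log n/a_n$. Two of your points are more careful than the paper. Your intersection-volume treatment of net balls that protrude from $B_{R_n}$ replaces the paper's Lipschitz expansion, which simply drops its $\mathcal O(h^{d+1})$ error term. You are also right that the final equality needs $a_n\ge c\,n^p$ rather than $a_n=\mathcal O(n^p)$.

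Your extra hypothesis $\log R_n=\mathcal O(\log n)$ is unnecessary. The paper uses $1\ge\int_{B_{R_n}}q_n\ge v_dR_n^d\,q_n(R_ne)$, which gives $(R_n/h_n)^d\le n/(v_dC\log n)$. This makes the covering number polynomial in $n$ with no growth condition on $R_n$.
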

\begin{proof}[Proof of Lemma \ref{lemma_fill.dist.general}]
Let $B_h(x)=\{y\in\mathbb{R}^d:\|y-x\|_2<h\}$ be the ball centered in $x$ with radius $h$. Then, for any $x\in B_{R_n}$ and $h>0$, define
\[
p_{h,n}(x):=\int_{B_h(x)}q_n(y)dy
\]
the probability that a single sample falls in $B_h(x)$. Since $q_n$ is $L$-Lipschitz on $B_{R_n}$, we have that for small $h$
\[
p_{h,n}(x)=q_n(x)|B_h(0)|+\mathcal{O}(h^{d+1})=v_dh^d\phi_n(x)+\mathcal{O}(h^{d+1}),
\]
where $v_d=|B_1(0)|$ is the Lebesgue measure of the unit ball in $\mathbb{R}^d$. Let's proceed by empty-ball probabilities \citep{durrett2019probability}, conditioning on $x\in B_{R_n}$, the probability that none of the points fall in $B_h(x)$ is 
\[
\mathbb{P}\left(\min_{i=1,\dots,n}\|x-x_i\|_2>h\right)=(1-p_{h,n}(x))^n\le\exp(-np_{h,n}(x))\le\exp(-v_dnh^dq_n(x)).
\]
Denote by $\mathcal{P}_{h/2}$ an $h/2$-net of $B_{R_n}$, its cardinality satisfies $m\le C_d(R_n/h)^d$ \citep{massart2007concentration}. By the covering property, if the minimal distance from each point to the sample set is at most $h/2$, then the fill distance satisfies $h_{\mathbf{x}_{1:n},B_{R_n}}\le h$. So,
\[
\mathbb{P}(h_{\mathbf{x}_{1:n},B_{R_n}}>h)\le\sum_{x_k\in\mathcal{P}_{h/2}}\mathbb{P}\left(\min_{i=1,\dots,n}\|x_k-x_i\|_2>h/2\right)\le m\exp\left(-v_dn(h/2)^d\inf_{x\in B_{R_n}}q_n(x)\right),
\]
where, since $q_n$ is radially decreasing, the minimum occurs at $\|x\|_2=R_n$. Hence, 
\[
\mathbb{P}(h_{\mathbf{x}_{1:n},B_{R_n}}>h)\le C_d\frac{R_n^d}{h^d}\exp\left(-v_dnh^dq_n(R_ne)\right),
\]
where $e$ is a unit norm vector in $\mathbb{R}^d$. Now, observe that since $q_n$ is radially decreasing and defined on $\mathbb{R}^d$, there exists $1\ge Z\ge0$ such that
\begin{equation}
\label{eq.1}
    1\ge Z=\int_{B_{R_n}}q_n(x)dx\ge q_{n}(R_ne)|B_{R_n}|=v_dR_n^dq_n(R_ne),
\end{equation}
so necessarily $R_n^d\le 1/(v_dq_n(R_ne))$. Thus, we obtain 
\[
\mathbb{P}(h_\mathbf{x}>h)\le C_d\frac{1}{h^dq_n(R_ne)}\exp\left(-v_dnh^dq_n(R_ne)\right)
\] 
and we can set the failure probability equal to $n^{-c}$ and solve for $h$, 
\begin{equation}
\label{eq.lemm}
    C_d\frac{1}{h^dq_n(R_ne)}\exp\left(-v_dnh^dq_n(R_ne)\right)\le n^{-c}.
\end{equation}
Recall that by assumption $a_n=nq_n(R_ne)=\mathcal{O}(n^p)$, for some $p>0$. Finally, we set \(h^d=A\log n/a_n\) which converges to zero as $n\rightarrow \infty$, so that the exponential term behaves like $n^{-v_dA}$, while the polynomial factor reduces to $n/(A\log n)$. Therefore, the inequality \ref{eq.lemm} reduces to 
\[
\frac{n}{\log n}n^{-v_dA}\le n^{-c},
\]
which is satisfied for any $A>(c+1)/v_d$.
\end{proof}

Observe that if $R_n$ grows with $n$, then the exponent $p$ in Lemma~\ref{lemma_fill.dist.general} must satisfy $p<1$. Indeed, since $q_n$ is radially decreasing, by \eqref{eq.1} we have $q_n(R_ne) \le 1/(v_d R_n^d)$. Combining this with the assumption $a_n = n q_n(R_ne) = \mathcal{O}(n^p)$ gives
\[
n q_n(R_ne) \le \frac{n}{v_d R_n^d} = \mathcal{O}(n^p) \quad \implies \quad R_n \lesssim n^{(1-p)/d}.
\]
Hence, $p=1$ corresponds to a fixed domain $R_n = \mathcal{O}(1)$, while $p<1$ occurs when $R_n$ grows with $n$, and $p>1$ would require $R_n$ shrinking. In particular, the domain radius must grow strictly slower than $n^{1/d}$ in order for the fill distance to decrease. This is intuitive: in dimension $d=1$, covering an interval of length $n$ with $n$ points results in a fill distance of order $1$, which does not vanish as $n \to \infty$. In higher dimensions, the same principle holds: a rapidly expanding domain cannot be efficiently filled by a finite number of points without $p<1$.

This lemma precisely quantifies how the interplay between sampling density and domain growth controls coverage, setting the stage for Theorem~\ref{thm1}.

\begin{proof}[Proof of Theorem \ref{thm1}]
Let $R_n > 0$ and $B_{R_n}$ the ball centered in zero with radius $R_n$, we decompose the error as 
\[
\begin{aligned}
    |\Pi(f)-\hat{\Pi}_n(f^\star)|&=
    \left|\int_{\mathbb{R}^d}(f(x)-s_n(x))d\Pi(x)\right|\le \int_{\mathbb{R}^d}|f(x)-s_n(x)|d\Pi(x)\\
    &=
    \int_{B_{R_n}}|f(x)-s_n(x)|d\Pi(x)+\int_{B_{R_n}^c}|f(x)-s_n(x)|d\Pi(x)\\
    &=:
    \mathrm{I} + \mathrm{II}
\end{aligned}
\]

\paragraph*{Term \(\mathrm{II}\):} 
\[
\begin{aligned}
    \mathrm{II}
    &\le
    \left(\int_{B_{R_n}^c}|f(x)-s_n(x)|^2dx\right)^{1/2}\left(\int_{B_{R_n}^c}\pi(x)^2dx\right)^{1/2}\\
    &\le
    \|f-s_n\|_{L^2(\mathbb{R}^d)}\|\pi\|_{L^2(B_{R_n}^c)}\le\|f-s_n\|_{\mathcal{H}^\alpha(\mathbb{R}^d)}\|\pi\|_{L^2(B_{R_n}^c)}\\
    &\lesssim
    \|f\|_{\mathcal{H}^\alpha(\mathbb{R}^d)}\|\pi\|_{L^2(B_{R_n}^c)},
\end{aligned}
\]
where the first inequality comes from the Cauchy-Schwartz and the last inequality comes from the norm equivalence of $\mathcal{H}^\alpha$ and $\mathcal{H}_k$ and the fact that $\|s_n\|_{\mathcal{H}_k} \le \|f\|_{\mathcal{H}_k}$, since, in the noiseless case, $s_n$ is the $\mathcal{H}_k$-orthogonal projection of $f$ onto $\mathrm{span}\{k(x_i,\cdot)\}_{i=1}^n$ \cite{Wendland_2004}. In particular,
\[
\begin{aligned}
    \|f-s_n\|_{\mathcal{H}^\alpha}
    &\le
    \|f \|_{\mathcal{H}^\alpha}+\|s_n\|_{\mathcal{H}^\alpha}\lesssim \|f \|_{\mathcal{H}^\alpha}+\|s_n\|_{\mathcal{H}_k}\\
    &\le
    \|f \|_{\mathcal{H}^\alpha}+\|f\|_{\mathcal{H}_k}\lesssim \|f \|_{\mathcal{H}^\alpha}+\|f\|_{\mathcal{H}^\alpha}\lesssim
    \|f \|_{\mathcal{H}^\alpha}
\end{aligned}
\]

We treat here separately the case in which $\Pi$ is a standard Gaussian or a Student-t.

\begin{enumerate}
\item If $\Pi$ is the standard Gaussian measure, using classical Gaussian tail bounds obtained via integration by parts \cite{feller1971vol2,durrett2019probability}, we have \(\|\pi\|_{L^2(B_{R_n}^c)} \le C_dR_n^{d-2} \exp\left(-R_n^2\right)\), for $R_n\ge1$. Choosing \(R_n = \sqrt{\alpha/d \log n}\), we obtain the tail bound
\[
\begin{aligned}
    \mathrm{II}
    &\le
    \|f\|_{\mathcal{H}_\alpha}
    C_1R_n^{d-2} \exp\left(-R_n^2\right)
    \le C_2\|f\|_{\mathcal{H}_\alpha}(\alpha/d\log n)^{d/2-1}{n^{-\alpha/d}}\\
    &\lesssim
    C \|f\|_{\mathcal{H}_\alpha}
    n^{-\alpha/d+\epsilon} =\mathcal{O}(n^{-\alpha/d+\varepsilon}),
\end{aligned}
\]
for some constant $C > 0$ independent of $n$, for any \(\epsilon>0\).
\item If $\Pi$ is a Student-$t$ measure, using classical Student-$t$ tail bounds, we have $\le C_d R_n^{-d/2-\nu}$. Choosing $R_n=n^{\alpha/(d(\alpha+\nu+d/2))}$, we obtain the tail bound
\[
    \mathrm{II}
    \le
    C' \|f\|_{\mathcal{H}_\alpha}
    C_dn^{-\alpha\nu/(d(\alpha+\nu+d/2))}=\mathcal{O}(n^{-\alpha\nu/(d(\alpha+\nu+d/2))}).
\]
\end{enumerate}
\paragraph*{Term \(\mathrm{I}\):}
We follow the proof of Theorem~1 in \cite{briol2019probabilistic}, adapting it to the present increasing domain \(B_{R_n}\), which satisfies the interior cone condition for any $n\in\mathbb{N}$. From a standard result in functional approximation due to \cite{WuSchaback1993,Wendland_2004}, see also \cite[Thm.~2.6]{WendlandRieger2005}, there exists \(C > 0\) independent of $n$ and \(H_n=Q(\theta,\alpha)R_n > 0\) such that, for all \(x\in\Omega\), $\mathbf{x}_{1:n}\subset\Omega$, where $\Omega$ being a bounded region satisfying the interior cone condition and \(h_{\mathbf{x}_{1:n},\Omega}<H_n\), \(|f(x)-m_n(x)|\le Ch_{\mathbf{x}_{1:n},\Omega}^\alpha\|f\|_{\mathcal{H}_\alpha}\). For other kernels, alternative bounds are well-known; \cite[Table~11.1]{Wendland_2004}. From Lemma \ref{lemma_fill.dist.general} we have that the fill distance decreases with high probability at the rate $n^{-p/d+\varepsilon}$, for $\varepsilon>0$ arbitrarily small, so for $n$ big enough and $p>0$ it holds with high probability that $h_{\mathbf{x}_{1:n},B_{R_n}}<H_n=Q(\theta,\alpha)R_n$. That is, there exists with high probability $n_0\in\mathbb{N}$ such that for any $n\ge n_0$, $h_{\mathbf{x}_{1:n},B_{R_n}}<H_n$ and therefore

\begin{equation}
\begin{aligned}
\label{eq_term1}
    \mathrm{I}=&
    \int_{B_{R_n}} |f(x) - s_n(x)| \, d\Pi(x)\le \Pi(B_{R_n})\sup_{x\in B_{R_n}}|f(x)-s_n(x)|\\
    \le&
    \,C\|f\|_{ \mathcal{H}_\alpha}h_{\mathbf{x}_{1:n},B_{R_n}}^{\alpha}.
\end{aligned}    
\end{equation}

Now, by Lemma \ref{lemm_conc} we obtain that there exists $0<c<1$ such that $cn$ points among $n$ sampled from a Gaussian or Student-$t$ distribution fall with exponentially high probability in $1$ standard deviation. Moreover, observe that both the Gaussian and Student-$t$ distributions are Lipschitz, radially decreasing and centered in $0$, so we can apply Lemma \ref{lemma_fill.dist.general} and the only quantity to care about is $a_n=nq_n(R_ne)$, where $q_n$ and $R_n$ will depend on the Gaussian and Student-$t$ case and $e$ is a unit vector in $\mathbb{R}^d$. Again, we treat separately $\Pi$ Gaussian and Student-$t$.
\begin{enumerate}
    \item If $\Pi$ is Gaussian, from above we have $R_n=\sqrt{\alpha/d\log n}$, so
    \[
    a_n=C_{\alpha,d}\frac{n}{(\log n)^{d/2}}=\mathcal{O}_P(n^{p-\varepsilon})
    \]
    with $p=1$ and thus $h_{\mathbf{x}_{1:n},B_{R_n}}=\mathcal{O}_P(a_n^{-p/d+\varepsilon})=\mathcal{O}_P(n^{-1/d+\varepsilon})$. Finally, plugging it into Equation \ref{eq_term1} the required bound is satisfied also in $B_{R_n}$.
    \item If $\Pi$ is Student-$t$, from above we have $R_n=n^{\alpha/(d(\alpha+\nu+d/2))}$, so
    \[
    a_n=C_{\alpha,d,\nu}\frac{n}{n^{\alpha/(\alpha+\nu+d/2)}}=\mathcal{O}_P(n^{p})
    \]
    with $p=(\nu+d/2)/(\alpha+\nu+d/2)$ and thus $h_{\mathbf{x}_{1:n},B_{R_n}}=\mathcal{O}_P(a_n^{-p/d+\varepsilon})=\mathcal{O}_P(n^{-(\nu+d/2)/(d(\alpha+\nu+d/2))+\varepsilon})$. Finally, plugging it into Equation \ref{eq_term1} the required bound is satisfied also in $B_{R_n}$.
\end{enumerate}

This completes the proof for \(\mathcal{H}_k\) equal to \(\mathcal{H}^\alpha\). More generally, if \(\mathcal{H}_k\) is norm equivalent to \(\mathcal{H}^\alpha\) then the result follows from the fact that \(\|\mu_n-\mu\|_{\mathcal{H}_k}\le \lambda\|\mu_n-\mu\|_{\mathcal{H}^\alpha}\) for some \(\lambda>0\). Finally, using Lemma 3 from \cite{briol2019probabilistic} we get also the contraction rates result.
\end{proof}

\begin{proof}[Proof of Corollary~\ref{cor1}]
If the points are sampled sequentially from $Q_n$ being respectively Gaussian or Student-$t$ distributions centered at zero with covariance matrices $\Sigma_i \asymp s(i)\, I_d$ for $i=1,\dots,n$, we obtain the same rates. The only difference in the proof consists of the computation of $\mathbb{P}\left(\min_{i=1,\dots,n}\|x-x_i\|_2>h\right)$. In the sequential setting we get
\[
\mathbb{P}\left(\min_{i=1,\dots,n}\|x-x_i\|_2>h\right)
= \prod_{i=1}^n(1-p_{h,i}(x))
\le \exp\left\{-\sum_{i=1}^np_{h,i}(x)\right\},
\]
where
\[
p_{h,i}(x)=\int_{B_h(x)}q_i(y)\,dy.
\]
Since $q_i$ is centered in $0$, $L$-Lipschitz and radially decreasing, proceeding as in Lemma \ref{lemma_fill.dist.general}, we obtain $\exp\left(-\sum_{i=1}^np_{h,i}(x)\right)
\le
\exp\left(-2v_dh^d\sum_{i=1}^nq_i(x)\right)\le\exp\left(-2v_dh^d\sum_{i=1}^nq_i(R_ne)\right)$. Thus, it remains to show $\sum_{i=1}^nq_i(R_ne)\asymp nq_n(R_ne)$.

\noindent\textbf{Gaussian case.}
If $s(i)\asymp \log i$, with $R_n^2 = \log n$. Then $q_i(R_ne) = (2\pi \log i)^{-d/2}\exp\left\{-\frac{R_n^2}{2\log i}\right\}$. Since
\[
\frac{R_n^2}{2\log i}
=
\frac{\log n}{2\log i}
=
\frac{\log n}{\log i}\cdot\frac{1}{2},
\]
let $i_* = n^\theta$, with $0<\theta<1$. For $i \ge i_*$, we have $\log i \ge \theta \log n$, and therefore $\frac{R_n^2}{2\log i} \le \frac{1}{2\theta}$. Thus
\[
q_i(R_ne)
\ge
(2\pi \theta \log n)^{-d/2}\exp\left\{-\frac{1}{2\theta}\right\}
\asymp q_n(R_ne).
\]
Moreover, the number of such indices satisfies $n - i_* = n(1 - n^{\theta-1}) \asymp n$, and therefore $\sum_{i=1}^n q_i(R_ne) \asymp n q_n(R_ne)$, where $nq_n(R_ne)$ is controlled in Theorem \ref{thm1}

\noindent\textbf{Student--$t$ case.}
If $s(i)\asymp i^{2\alpha/(d(\alpha+\nu+d/2))}$, with $R_n^2 = n^{2\alpha/(d(\alpha+\nu+d/2))}$. Then 
\[q_i(R_ne)
= c_{\nu,d}\, s(i)^{-d/2}
\left(1 + \frac{R_n^2}{\nu\, s(i)}\right)^{-(\nu+d)/2}.\] Since
\[
\frac{R_n^2}{\nu\, s(i)}
=
\frac{n^{2\alpha/(d(\alpha+\nu+d/2))}}{\nu\, i^{2\alpha/(d(\alpha+\nu+d/2))}}
=
\frac{1}{\nu}
\left(\frac{n}{i}\right)^{2\alpha/(d(\alpha+\nu+d/2))},
\]
let $i_* = \theta n$, with $0<\theta<1$. For $i \ge i_*$, we have $\left(\frac{n}{i}\right)^{2\alpha/(d(\alpha+\nu+d/2))}
\le \theta^{-2\alpha/(d(\alpha+\nu+d/2))}$, and therefore $1 + \frac{R_n^2}{\nu\, s(i)}
\le
1 + \frac{1}{\nu}\,\theta^{-2\alpha/(d(\alpha+\nu+d/2))}$. Hence,
\[
\left(1 + \frac{R_n^2}{\nu\, s(i)}\right)^{-(\nu+d)/2}
\ge
\left(1 + \frac{1}{\nu}\,\theta^{-2\alpha/(d(\alpha+\nu+d/2))}\right)^{-(\nu+d)/2}.
\]
Moreover, since $s(i)\asymp i^{2\alpha/(d(\alpha+\nu+d/2))}$, for $i \ge i_*$ we have $s(i) \le s(n),$ and $s(i) \ge s(\theta n) \asymp \theta^{2\alpha/(d(\alpha+\nu+d/2))} s(n)$, so that $s(i)^{-d/2}\ge s(n)^{-d/2}$. Combining the bounds, we obtain for all $i \ge i_*$,
\[
q_i(R_ne)
\ge
c_{\nu,d}\, s(n)^{-d/2}
\left(1 + \frac{1}{\nu}\,\theta^{-2\alpha/(d(\alpha+\nu+d/2))}\right)^{-(\nu+d)/2}
\asymp q_n(R_ne).
\]

Finally, the number of such indices satisfies $n - i_* = (1-\theta)n \asymp n$, and therefore
\[
\sum_{i=1}^n q_i(R_ne) \asymp n\, q_n(R_ne),
\]
where $n q_n(R_ne)$ is controlled in Theorem~\ref{thm1}.
\end{proof}

\begin{lemma}
\label{lemma_minmax}
    Let $\alpha>d/2$, $\nu>0$, $R_n\ge1$ and let $p_\nu:\mathbb{R}^d\rightarrow (0,\infty)$ satisfying $c_1(1+\|x\|^2_2)^{-(\nu+d)/2}\le p_\nu(x)\le c_2(1+\|x\|^2_2)^{-(\nu+d)/2}$ for some constants $c_1,c_2>0$. Then, there exists a constant $c=c(\alpha,d,\nu,c_1,c_2)>0$ such that
    \[
    \sup_{\|f\|_{H^\alpha(\mathbb{R}^d)\le1}}\left|\int_{B_{R_n}^c}f(x)p_\nu(x)dx\right|\ge c\,R_n^{-\nu-d/2}.
    \]
\end{lemma}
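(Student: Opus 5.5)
We need to show that some $f$ with $\|f\|_{H^\alpha}\le 1$ has tail integral $\int_{B_{R_n}^c} f p_\nu \gtrsim R_n^{-\nu-d/2}$. By duality the supremum equals the dual norm $\|p_\nu\mathbf 1_{B_{R_n}^c}\|_{H^{-\alpha}}$, so it suffices to exhibit one good test function. The plan is to take a nonnegative bump that is spread out over the shell $\{R_n\le\|x\|_2\le 2R_n\}$ and normalised in $H^\alpha$.

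\textbf{Construction.} Fix $\psi\in C_c^\infty(\mathbb R^d)$ with $0\le\psi\le1$, $\psi=1$ on $\{3/2\le\|y\|_2\le 7/4\}$, and $\operatorname{supp}\psi\subset\{1\le\|y\|_2\le2\}$. Set
\[
f_R(x)=\lambda_R\,\psi(x/R), \qquad R=R_n\ge 1,
\]
with $\lambda_R>0$ chosen so that $\|f_R\|_{H^\alpha(\mathbb R^d)}\le 1$. Since $f_R\ge0$ and $f_R$ is supported in $B_{R}^c$, no cancellation can occur in the integral.

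\textbf{Norm computation.} For a multi-index $\beta$ with $|\beta|\le\alpha$, the change of variables $x=Ry$ gives
\[
\|D^\beta\psi(\cdot/R)\|_{L^2}^2=R^{d-2|\beta|}\|D^\beta\psi\|_{L^2}^2\le R^{d}\|D^\beta\psi\|_{L^2}^2,
\]
because $R\ge1$. Hence $\|\psi(\cdot/R)\|_{H^\alpha}\le C_\psi R^{d/2}$, and the choice $\lambda_R=C_\psi^{-1}R^{-d/2}$ is admissible. This is where the $d/2$ in the exponent comes from: the $L^2$ part of the norm dominates for large spread-out bumps.

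\textbf{Lower bound on the integral.} On the region where $\psi(x/R)=1$, that is $\{3R/2\le\|x\|_2\le 7R/4\}$, we have $1+\|x\|_2^2\le 1+4R^2\le 5R^2$. The assumed lower bound on $p_\nu$ therefore gives $p_\nu(x)\ge c_1 5^{-(\nu+d)/2}R^{-\nu-d}$ there. This shell has volume $c_d R^d$. Combining,
\[
\int_{B_R^c}f_Rp_\nu\ \ge\ \lambda_R\, c_1 5^{-(\nu+d)/2}R^{-\nu-d}\cdot c_dR^d = c\,R^{-\nu-d/2},
\]
with $c$ depending only on $\alpha,d,\nu,c_1$ (and the fixed $\psi$). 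Taking the supremum over admissible $f$ yields the claim.

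\textbf{Remarks.} I do not expect a real obstacle. The only points needing care are the following.
\begin{enumerate}
\item The derivative terms in the $H^\alpha$ norm scale favourably, with negative powers of $R$; this uses $R_n\ge1$, which is exactly why the statement assumes it.
\item If $\alpha$ is non-integer, the same scaling argument works through the Fourier characterisation, since $(1+|\xi|^2/R^2)^{\alpha}\le(1+|\xi|^2)^{\alpha}$.
\item The hypothesis $\alpha>d/2$ and the upper constant $c_2$ are not needed for this lemma itself; presumably they enter only in the subsequent minimax argument.
\end{enumerate}
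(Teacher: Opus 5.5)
Your proposal is correct and follows essentially the same route as the paper. Both arguments take a nonnegative bump of width $\asymp R_n$ sitting at distance $\asymp R_n$ from the origin, normalise it by the factor $R_n^{-d/2}$ that comes from the $L^2$ part of the $H^\alpha$ norm, and bound the tail integral below by (density lower bound $\asymp R_n^{-\nu-d}$) $\times$ (volume $\asymp R_n^{d}$).

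The differences are cosmetic. You use a radial annulus bump around the origin, whereas the paper translates a ball bump to a centre $y_{R_n}$ and computes the norm via the Fourier transform. Your annulus has a small advantage: it automatically gives disjointness from $B_{R_n}$ and the $R_n^{-\nu-d}$ density bound. In the paper's version these silently require $\|y_{R_n}\|\ge 3R_n$ (not just $>2R_n$) together with $\|y_{R_n}\|=O(R_n)$.
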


\begin{proof}[Proof of Lemma \ref{lemma_minmax}]
Fix $\psi\in C_c^\infty(B(0,2))$ such that $0\le\psi(x)\le1$ for any $x\in\mathbb{R}^d$, $\psi(x)=1$ for $\|x\|_2\le1$, $\text{supp}(\psi)\subset B(0,2)$ and $\|\psi\|_{H^\alpha}=1$. Such $\psi$ exists by standard mollifier constructions \cite{Adams1975}. Let $y_{R_n}\in\{\|x\|_2>2R_n\}$ such that $B(y,R_n)\cap B_{R_n}=\emptyset$. Define the scaled bump $\phi_{R_n}(x)=\psi((x-y_{R_n})/R_n)$, then $\text{supp}(\psi)\subset B(y_{R_n},2R_n)\subset\{\|x\|_2>2R_n\}$, $\phi_R=1$ on $B(y_{R_n},R_n)$ and in particular $\phi_{R_n}(x)=0$ on $B_{R_n}$ (by construction of $y_{R_n}$). Let's compute the Sobolev norm: translation by $y_{R_n}$ does not change the $\mathcal{H}^\alpha$ norm, so only the dilation matters. For the standard inhomogeneous Sobolev norm, $\|f\|_{\mathcal{H}^\alpha(\mathbb{R}^d)}^2=\int_{\mathbb{R}^d}(1+\|\xi\|_2^2)^\alpha |\widehat f(\xi)|^2\,d\xi$, a change of variables gives $\widehat{\phi_R}(\xi)=R_n^d e^{-i y_{R_n}\cdot \xi}\widehat\psi(R_n\xi)$. Hence
\[
\|\phi_{R_n}\|_{\mathcal{H}^\alpha}^2
= R_n^d \int_{\mathbb{R}^d}\left(1+\frac{\|\eta\|_2^2}{R_n^2}\right)^\alpha |\widehat\psi(\eta)|^2\,d\eta,
\]
after the substitution $\eta=R_n\xi$. This implies the standard estimate \[\|\phi_{R_n}\|_{H^\alpha}
= C_{\alpha}\, R_n^{d/2}\max(1,R_n^{-\alpha})\,\|\psi\|_{H^\alpha}= C_{\alpha}\, R_n^{d/2}\] for $R\ge1$. We normalize and get $f_{R_n}(x)=cR_n^{-d/2}\phi_{R_n}(x)$ and $\|f_{R_n}\|_{H^\alpha}=1$. On $B(y_{R_n},R_n)$ we have $f_{R_n}(x)=cR_n^{-d/2}$. By the tail assumption on $p_\nu$, we get
\[
\int_{B_{R_n}^c}f_{R_n}(x)p(x)dx\ge c \int_{B(y_{R_n},R_n)}R_n^{-d/2}p(x)dx\ge c_1'R_n^{-(\nu+d)-d/2}\omega_dR_n^d=c_3R_n^{-\nu-d/2}.
\]
\end{proof}

\begin{proof}[Proof of Theorem \ref{thm:lower}]
Let's first denote by $\mathcal{F}_n^{(\text{in})}=\{f\in\mathcal{H}^\alpha(\mathbb{R}^d):\|f\|_{\mathcal{H}^\alpha}\le1,\text{supp}(f)\subset B_{R_n}\}$ and $\mathcal{F}_n^{(\text{out})}=\{f\in\mathcal{H}^\alpha(\mathbb{R}^d):\|f\|_{\mathcal{H}^\alpha}\le1,\text{supp}(f)\subset B_{R_n}^c\}$. Moreover, let $R_n = \max_i \|x_i\|_2 + 1$ so that $x_i \in B_{R_n}$ for all $i=1,\dots,n$. Formally,
\[
\begin{aligned}
    \mathcal{E}_n &= \inf_{\mathcal{Q}_n}\sup_{\|f\|_{\mathcal{H}^\alpha}\le1}\mathcal{E}(Q_n,f)\ge \inf_{\mathcal{Q}_n}\max\left(\sup_{f\in\mathcal{F}_n^{(\text{in})}}\mathcal{E}(Q_n,f),\sup_{f\in\mathcal{F}_n^{(\text{out})}}\mathcal{E}(Q_n,f)\right)
    ,
\end{aligned}
\]
where, for $f\in\mathcal{F}_n^{(\text{in})}$, $\mathcal{E}(Q_n,f)=\left|\int_{B_{R_n}}f(x)p_\nu(x)dx-\sum_{i=1}^nw_if(x_i)\right|$ and for $f\in\mathcal{F}_n^{(\text{out})}$, $\mathcal{E}(Q_n,f)=\left|\int_{B_{R_n}^c}f(x)p_\nu(x)dx\right|$. We bound the first term from below using the fill distance $h_{\mathbf{x}_{1:n},\,B_{R_n}}$. By the standard scattered-data lower bound \cite[page 37]{Wendland_2004,novak1988deterministic}
for any $n$ points $\mathbf{x}_{1:n} \subset B_{R_n}$ there exists $g \in H^\alpha(B_R)$ with $\|g\|_{H^\alpha} \le 1$ such that
\begin{equation}
    \left|\int_{B_{R_n}} g(x) p_{\nu}(x)dx - \sum_{i=1}^nw_if(x_i)\right|
    \;\ge\;
    c\, h_{\mathbf{x}_{1:n},B_{R_n}}^{\alpha}
    \label{eq:fill_lb}
\end{equation}
for some constant $c > 0$ independent of $n$ and $R$. We now bound $h_{\mathbf{x}_{1:n},B_{R_n}}$ from below. Since by optimal filling criteria on $[0,1]^d$, the fill distance scales at most like $h_{\mathbf{x}_{1:n},[0,1]^d}\ge n^{-1/d}$ \cite{Wendland_2004}, by mapping $[0,1]^d$ to $B_{R_n}$, we get
\begin{equation}
    h_{\mathbf{x}_{1:n},B_{R_n}}
    \ge
    c \left(\frac{R_n^d}{n}\right)^{1/d}
    =
    c R_nn^{-1/d},
    \label{eq:fill_vol}
\end{equation}
which gives, for $f\in\mathcal{F}_n^{(\text{in})}$, $\mathcal{E}(Q_n,f)\ge c\,R_n^\alpha\, n^{-\alpha/d}$. Moreover, Lemma \ref{lemma_minmax} gives that if $f\in\mathcal{F}_n^{(\text{out})}$, $\mathcal{E}(Q_n,f)\ge c\,R_n^{-\nu-d/2}$. Thus,
\[
\begin{aligned}
   \mathcal{E}_n
   &\ge C\inf_{\mathcal{Q}_n}\max(R_n^\alpha\, n^{-\alpha/d},R_n^{-\nu-d/2})=C\inf_{R_n>0}\inf_{\mathcal{Q}_n(R_n)}\max(R_n^\alpha\, n^{-\alpha/d},R_n^{-\nu-d/2})\\
   &=
   C\inf_{R_n>0}\max(R_n^\alpha\, n^{-\alpha/d},R_n^{-\nu-d/2}),
\end{aligned}
\]
where we denoted by $\mathcal{Q}_n(R_n)$ the class of quadrature rules with points distant at most $R_n$ from the origin. Finally, by taking $R_n$ which minimizes this maximum, i.e. $R_n=n^{\frac{\alpha}{d(\alpha+\nu+d/2)}}$, we obtain $\mathcal{E}_n=n^{-\frac{\alpha(\nu+d/2)}{d(\alpha+\nu+d/2)}}$. 
\end{proof}

\begin{proof}[Proof of Proposition \ref{prop:variance_control}]
Let us first assume that $f\in\mathcal{H}_k$. To prove the proposition it is sufficient to show that the variance with respect to $\mathbf{x}_{1:n}$ of the posterior mean $\mathrm{Var}_{\mathbf{x}_{1:n}}\left[\mathbb{E}_{f^\star}[\Pi(f^\star) \mid \mathbf{x}_{1:n}\,]\right]$ is bounded by the RKHS norm of $f$, $\|f\|_{\mathcal{H}_k}$. Let $\mu_\Pi(x):=\int_\mathcal{X}k(x,y)d\Pi(y)\in\mathcal{H}_k$ be the kernel mean and $\phi_{\mathbf{x}_{1:n}}(x):=\sum_{i=1}^nw_i(\mathbf{x}_{1:n})k(x_i,x)\in\mathcal{H}_k$. Then, we can write $\mathbb{E}_{f^\star}[\Pi(f^\star) \mid \mathbf{x}_{1:n}\,]=\langle f,\phi_{\mathbf{x}_{1:n}}\rangle_{\mathcal{H}_k}$ \cite{mahsereci2026bayesian}. Therefore the following holds,
\[
\begin{aligned}
    \mathrm{Var}_{\mathbf{x}_{1:n}} \left[\mathbb{E}_{f^\star}[\Pi(f^\star) \mid \mathbf{x}_{1:n}\,]\right]
    &=
    \mathrm{Var}_{\mathbf{x}_{1:n}} \left[\langle f,\phi_{\mathbf{x}_{1:n}}\rangle_{\mathcal{H}_k}\right] 
    =
    \mathrm{Var}_{\mathbf{x}_{1:n}} \left[\langle f,\mu_\Pi-\varepsilon_{\mathbf{x}_{1:n}}\rangle_{\mathcal{H}_k}\right]\\
    &=
    \mathrm{Var}_{\mathbf{x}_{1:n}} \left[\langle f,\varepsilon_{\mathbf{x}_{1:n}}\rangle_{\mathcal{H}_k}\right]
    \le
    \mathbb{E}_{\mathbf{x}_{1:n}} \left[\langle f,\varepsilon_{\mathbf{x}_{1:n}}\rangle^2_{\mathcal{H}_k}\right]\\
    &\le
    \| f\|^2_{\mathcal{H}_k} \mathbb{E}_{\mathbf{x}_{1:n}}\left[\mathrm{Var}_{f^\star}[\Pi(f^\star) \mid \mathbf{x}_{1:n}\,]\right],
\end{aligned}
\]
where $\varepsilon_{\mathbf{x}_{1:n}}:=\mu_\Pi-\phi_{\mathbf{x}_{1:n}}\in\mathcal{H}_k$ and for the last inequality see \cite[Proposition~2.6]{mahsereci2026bayesian}. Moreover, if $\mathcal{H}_k$ is norm equivalent to $\mathcal{H}^\alpha$ then there exists $\lambda>0$ such that $\|f\|_{\mathcal{H}_k}\le\lambda\|f\|_{\mathcal{H}^\alpha}$. Finally,
\[
\begin{aligned}
\mathrm{Var}\left[\Pi(f^\star)\mid \mathbf{x}_{1:n}\stackrel{\mathrm{i.i.d.}}{\sim} Q_n\right]
&=
\mathbb{E}_{\mathbf{x}_{1:n}}\left[\mathrm{Var}_{f^\star}[\Pi(f^\star) \mid \mathbf{x}_{1:n}\,]\right]+ \mathrm{Var}_{\mathbf{x}_{1:n}}\left[\mathbb{E}_{f^\star}[\Pi(f^\star) \mid \mathbf{x}_{1:n}\,]\right]\\
&\le
\mathbb{E}_{\mathbf{x}_{1:n}}\left[\mathrm{Var}_{f^\star}[\Pi(f^\star) \mid \mathbf{x}_{1:n}\,]\right]+\|f\|_{H_k}^2\mathbb{E}_{\mathbf{x}_{1:n}}\left[\mathrm{Var}_{f^\star}[\Pi(f^\star) \mid \mathbf{x}_{1:n}\,]\right]\\
&\le
(1+\lambda^2\|f\|_{\mathcal{H}^\alpha}^2)\mathbb{E}_{\mathbf{x}_{1:n}}\left[\mathrm{Var}_{f^\star}[\Pi(f^\star) \mid \mathbf{x}_{1:n}\,]\right]\\
&\le
C(1+\|f\|_{\mathcal{H}^\alpha}^2)\mathbb{E}_{\mathbf{x}_{1:n}}\left[\mathrm{Var}_{f^\star}[\Pi(f^\star) \mid \mathbf{x}_{1:n}\,]\right].
\end{aligned}
\]
\end{proof}

\begin{proof}[Proof of Corollary \ref{cor_postvar}]
By assumption, $f\in\mathcal{H}^\alpha$, with $\mathcal{H}^\alpha$ norm-equivalent to the RKHS $\mathcal{H}_k$, so $\|f\|_{\mathcal{H}^\alpha}^2<\infty$. Thus, by Proposition \ref{prop:variance_control}, it is enough to prove that $\mathbb{E}_{\mathbf{x}_{1:n}}\left[\mathrm{Var}_{f^\star}\left[\Pi(f^\star) \mid \mathbf{x}_{1:n}\,\right]\right]=\mathcal{O}(n^{-p})$, where $p$ depends on the measure of the integral according to Theorem \ref{thm1}. By Theorem \ref{thm1}, $\mathrm{Var}_{f^\star}\left[\Pi(f^\star) \mid \mathbf{x}_{1:n}\,\right]=\mathcal{O}_P(n^{-p})$ and for simplicity we denote $v_n=\mathrm{Var}_{f^\star}\left[\Pi(f^\star) \mid \mathbf{x}_{1:n}\,\right]$ and $v_0$ the prior variance, where we observe that $v_0\ge v_n$ for any $n\ge1$. Then,
\[
\begin{aligned}
\mathbb{E}_\mathbf{x}\left[v_n\right]
&=
\mathbb{E}_\mathbf{x}\left[v_n{1_{\{v_n\le Cn^{-p}\}}}\right]+\mathbb{E}_\mathbf{x}\left[v_n{1_{\{v_n> Cn^{-p}\}}}\right]\\
&\le
Cn^{-p}+v_0\mathbb{E}_\mathbf{x}\left[{1_{\{v_n> Cn^{-p}\}}}\right]\\
&\le
Cn^{-p}+v_0n^{-c},
\end{aligned}
\]
where the last inequality holds since the probability that $v_n=O(n^{-p})$ is equal to $1-n^{-c}$, for any $c>0$. Finally, taking $c>p$ and $p$ according to Theorem \ref{thm1} gives the thesis.
\end{proof}

\begin{proof}[Proof of Proposition \ref{prop:ergodicity}]
The posterior $\pi$ is strictly positive and continuous on $(0,\infty)^2$. The $\sigma_f^2$ update is an exact draw from its full conditional, which has a Lebesgue density on $(0,\infty)$; hence $\mathbb{P}(\sigma_{t+1}^2 = \sigma_t^2)=0$ and the $\sigma_f^2$ coordinate is updated at every iteration almost surely. For the $\ell$ update, the Metropolis--Hastings kernel has proposal density $q(\ell,\ell')>0$ on $(0,\infty)^2$ and target density $\pi(\ell \mid \sigma^2,\mathbf f_{1:n})>0$ for all $\ell>0$. Hence every proposal is generated with positive probability and accepted with positive probability, implying that the MH kernel is Lebesgue-irreducible.

Consequently, from any initial state $\theta_0$, with probability $1$ the chain updates the $\sigma_f^2$ coordinate at every iteration and makes at least one accepted move in the $\ell$ coordinate in finite time. Hence both component kernels satisfy the assumptions of Theorem~12 in~\cite{RobertsRosenthal2006Harris}, implying the joint chain is Harris recurrent. Aperiodicity follows from the positive probability of MH rejection, which induces self-transitions. Therefore the chain is aperiodic, Harris recurrent, and hence ergodic.
\end{proof}

\section{Beyond Mass-Decaying Integrands}
\label{AppendixB}
The analysis developed in the main text relies on assumptions ensuring sufficient mass decay of the integrand at infinity. In particular, these conditions guarantee that the target function belongs to the appropriate weighted Sobolev spaces, which in turn control the integration error. We conjecture that these assumptions can be relaxed. Specifically, it is natural to expect that similar results hold for broader classes of functions that are only locally Sobolev and bounded in \(L^\infty(\mathbb{R}^d)\), without requiring any mass decay at infinity. A possible formal setting is to consider functions \(f\) such that
\[
f \in \mathcal{H}^\alpha_{\mathrm{loc}}(\mathbb{R}^d) \cap L^\infty(\mathbb{R}^d),
\]
where local Sobolev means that the functions are Sobolev in any compact $K\subset \mathbb{R}^d$.

Extending the theoretical guarantees to this setting would likely require controlling the tail error using the $L^\infty$-norm, specifically the one of the Gaussian Process posterior mean $s_n$. In particular, one should expect a deterioration in the convergence rates: the dimension-dependent factor appearing in the main results (e.g., the \(d/2\) term) is unlikely to be preserved under such weaker assumptions.

To provide preliminary evidence, we report a simple numerical experiment in Figure \ref{fig:four_plots}. We consider the function $f(x) = 1 + \sin(2\pi x)$, which is bounded and smooth but does not exhibit decay at infinity. The goal is to approximate its integral with respect to a standard Gaussian measure on \(\mathbb{R}\). We apply kernel-based quadrature methods using both a Gaussian kernel and a Mat\'ern kernel with smoothness parameter $3/2$, mimicking the experiments in Section \ref{sec_numexp}. Despite the lack of decay in the integrand, the empirical results indicate similar behavior as the mass decaying case.

\begin{figure}[t]
    \centering
    \begin{subfigure}{0.24\textwidth}
        \centering
        \includegraphics[width=\linewidth]{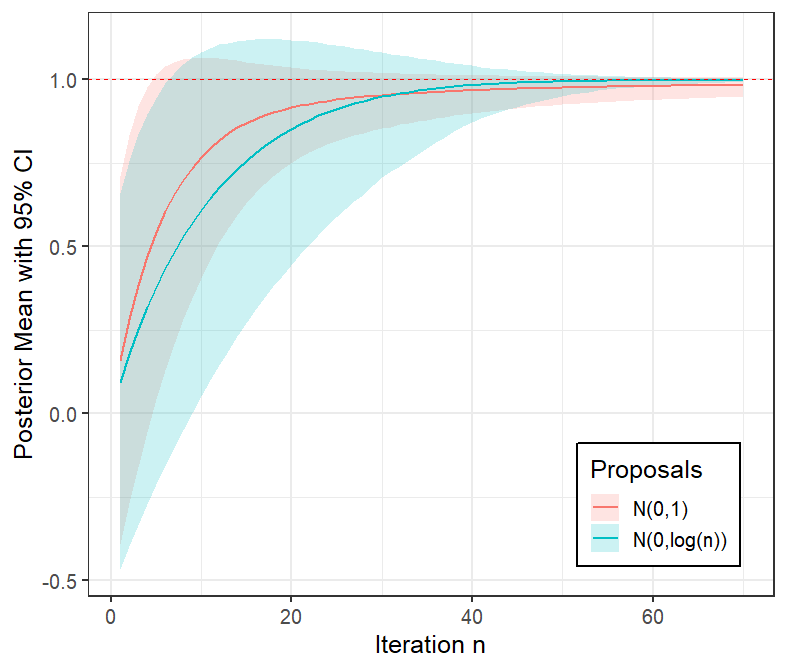}
    \end{subfigure}
    \hfill
    \begin{subfigure}{0.24\textwidth}
        \centering
        \includegraphics[width=\linewidth]{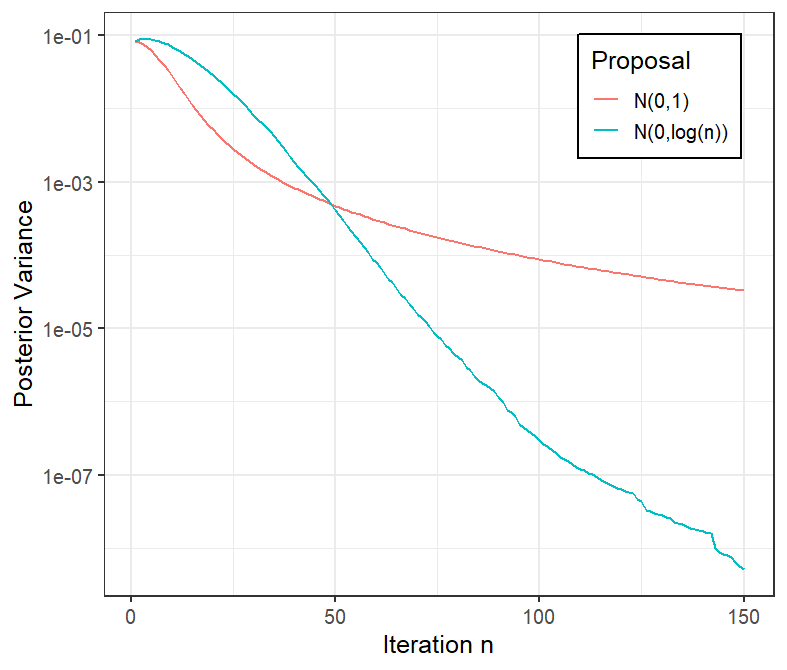}
    \end{subfigure}
    \hfill
    \begin{subfigure}{0.24\textwidth}
        \centering
        \includegraphics[width=\linewidth]{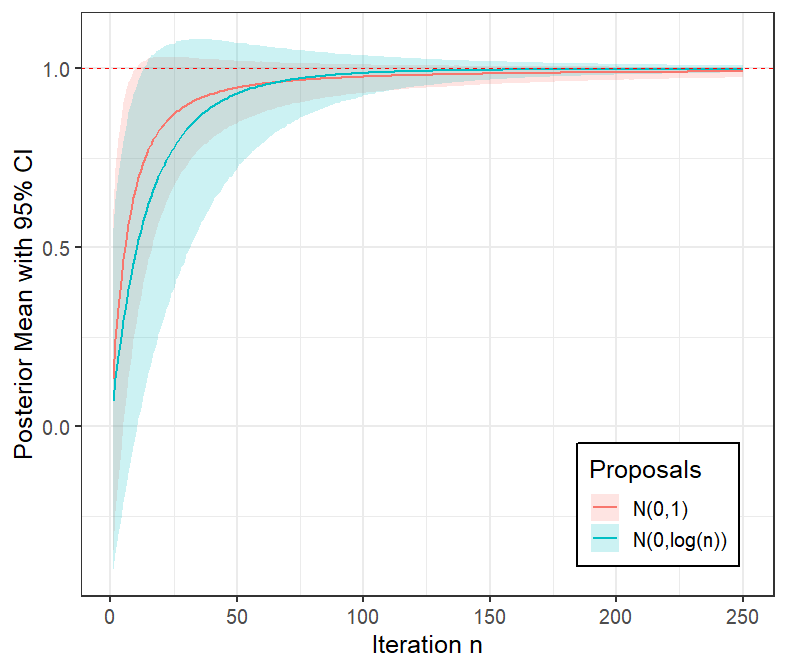}
    \end{subfigure}
    \hfill
    \begin{subfigure}{0.24\textwidth}
        \centering
        \includegraphics[width=\linewidth]{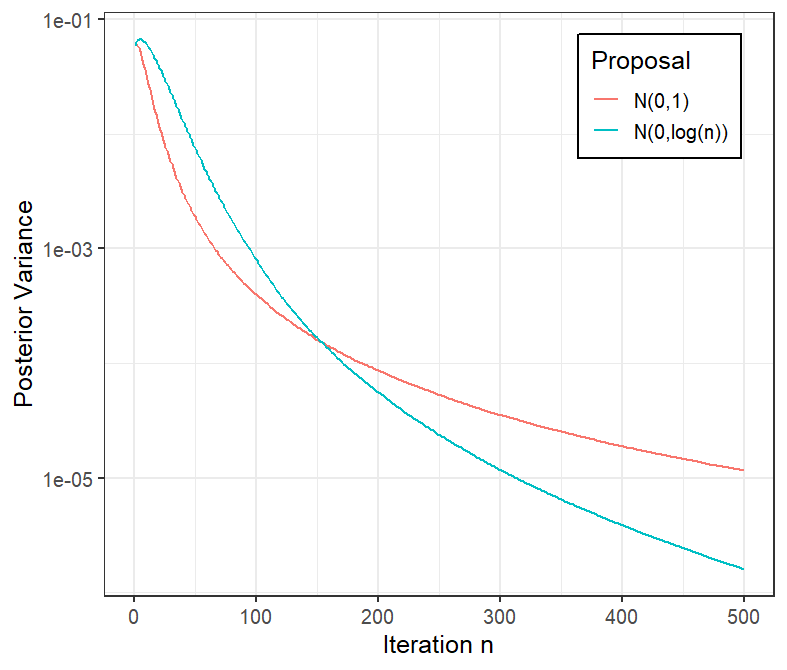}
    \end{subfigure}
    \caption{Posterior mean with \(95\%\) credible intervals and posterior variance across iterations for Gaussian and Mat\'ern \(3/2\) kernels.}
    \label{fig:four_plots}
\end{figure}

\end{document}